\newtheorem{lemma}{Lemma}[section]
\newtheorem{theorem}[lemma]{Theorem}
\theoremstyle{definition}
\newcommand{\dist}{\delta}
\newcommand{\cost}{\mathrm{cost}}
\newcommand{\val}{\mathrm{value}}
\newcommand{\cP}{\mathcal{P}}
\newcommand{\cC}{\mathcal{C}}
\newcommand{\cS}{\mathcal{S}}
\newcommand{\cB}{\mathcal{B}}
\newcommand{\eps}{\varepsilon}
\title{On Pairwise Spanners\footnote{Partially supported by the ERC Starting Grant NEWNET 279352. This work was partially done while the third author was visiting IDSIA.}}
\author{
 Marek Cygan \\
  IDSIA, University of Lugano, Switzerland \\
\small  \texttt{marek@idsia.ch}  \\
\and
Fabrizio Grandoni \\
  IDSIA, University of Lugano, Switzerland \\
\small   \texttt{fabrizio@idsia.ch}  \\
\and
Telikepalli Kavitha \\
Tata Institute of Fundamental Research, India \\
\small \texttt{kavitha@tcs.tifr.res.in}
}
\begin{document}

\maketitle

\begin{abstract}
Given an undirected $n$-node unweighted graph $G = (V, E)$, a spanner with \emph{stretch function} $f(\cdot)$ is a subgraph $H\subseteq G$ such that, if two nodes are at distance $d$ in $G$, then they are at distance at most $f(d)$ in $H$. Spanners are very well studied in the literature. The typical goal is to construct the sparsest possible spanner for a given stretch function. 

In this paper we study \emph{pairwise spanners}, where we require to approximate the $u$-$v$ distance 
only for pairs $(u,v)$ in a given set $\cP \subseteq V\times V$.
Such $\cP$-spanners were studied before [Coppersmith,Elkin'05] only in the special case that $f(\cdot)$ is the identity function, i.e. distances between relevant pairs must be preserved exactly (a.k.a. {\em pairwise preservers}). 

Here we present pairwise spanners which are at the same time sparser than the best known preservers (on the same $\cP$) and of the best known spanners (with the same $f(\cdot)$).
In more detail, for arbitrary $\cP$, we show that there exists a $\mathcal{P}$-spanner of size $O(n(|\cP|\log n)^{1/4})$ with $f(d)=d+4\log n$. Alternatively, for any $\eps>0$, there exists a $\cP$-spanner of size $O(n|\cP|^{1/4}\sqrt{\frac{\log n}{\eps}})$ with $f(d)=(1+\eps)d+4$. We also consider the relevant special case that there is a critical set of nodes $S\subseteq V$, and we wish to approximate either the distances within nodes in $S$ or from nodes in $S$ to any other node. We show that there exists an $(S\times S)$-spanner of size $O(n\sqrt{|S|})$ with $f(d)=d+2$, and an $(S\times V)$-spanner of size $O(n\sqrt{|S|\log n})$ with $f(d)=d+2\log n$. All the mentioned pairwise spanners can be constructed in polynomial time.
\end{abstract}

\section{Introduction}
\label{intro}

Let $G = (V,E)$ be an undirected unweighted graph. A subgraph $H$ of $G$ is a \emph{spanner} with stretch function $f(\cdot)$ if, given any two nodes $s, t \in V$ at distance $\delta_G(s,t)$ in $G$, the distance $\delta_H(s,t)$ between the same two nodes in $H$ is at most $f(\delta_G(s,t))$. An $(\alpha,\beta)$ spanner is a spanner with stretch functions $f(d)=\alpha\cdot d+\beta$. ($\alpha$ and $\beta$ are the \emph{multiplicative stretch} and \emph{additive stretch} of the spanner, respectively). If $\beta=0$ the spanner is called \emph{multiplicative}, and if $\alpha=1$ the spanner is called \emph{purely-additive}. 
Spanners are very well studied in the literature (see Section \ref{sec:related}). The typical goal is to achieve the sparsest possible spanner for a given stretch function $f(\cdot)$ \cite{BKMP05,BS03,DHZ97,E05,EP04,HZ96,PS89,RZ04,RMZ05,TZ06}.

In this paper we address the natural problem of finding (even sparser) spanners in the case that we want to approximately preserve distances only among a given subset $\cP\subseteq V\times V$ of pairs. More formally a \emph{pairwise spanner} on pairs $\cP$, or $\cP$-spanner for short, with stretch function $f(\cdot)$ is a subgraph $H\subseteq G$ such that, for any $(s,t)\in \cP$, $\delta_H(s,t)\leq f(\delta_G(s,t))$. In particular, a classical (all-pairs) spanner is a $(V\times V)$-spanner. Pairwise spanners capture scenarios where we only (or mostly) care about some distances in the graph. 

To the best of our knowledge, pairwise spanners were studied before only in the special case that $f(\cdot)$ is the identity function, i.e.  distances between relevant pairs have to be preserved exactly.  Coppersmith and Elkin \cite{CE05} call such spanners \emph{pairwise (distance) preservers}, and show that one can compute pairwise preservers of size (i.e., number of edges) $O(\min\left\{|\cP|\sqrt{n}, \ n\sqrt{|\cP|}\right\})$. 

The authors left it as an open problem to study the {\em approximate} variants of these 
preservers, i.e. what we call pairwise spanners here. This paper takes the first step in answering this question. We show that (for suitable $\cP$) it is possible to achieve $\cP$-spanners which are at the same time sparser than the preservers in \cite{CE05} (on the same set $\cP$) and than the sparsest known classical spanners (with the same stretch function).


\subsection{Our Results and Techniques}

In this paper we present some polynomial-time algorithms to construct $(\alpha,\beta)$ $\cP$-spanners for unweighted graphs. Our spanners are either purely-additive (i.e. $\alpha=1$) or \emph{near-additive} (i.e. $\alpha=1+\eps$ for an arbitrarily small $\eps>0$). 
For arbitrary $\cP$, we achieve the following main results (see Section \ref{section-pairwise}).
\begin{theorem}
\label{thm:pairwise} {\bf (near-additive pairwise)}
For any $\eps>0$ and any $\cP\subseteq V\times V$, there is a polynomial time algorithm to compute a $(1+\eps,4)$ $\cP$-spanner of size $O(n|\cP|^{1/4}\sqrt{\log n / \eps})$. 
\end{theorem}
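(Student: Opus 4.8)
The plan is to blend two classical ingredients: the path‑counting behind pairwise preservers (following Coppersmith--Elkin) and the high‑degree clustering behind additive spanners (following Dor--Halperin--Zwick), using the $(1+\eps)$ multiplicative slack only to tame long shortest paths.

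First I fix a \emph{consistent} family of shortest paths $\{P_{st}\}_{(s,t)\in\cP}$ in $G$ (breaking ties by a fixed linear order on $V$), so that any two of these paths that share two vertices agree on the subpath between them; this is what makes a preserver‑style counting go through. I then choose a degree threshold $\Delta$, to be fixed at the end at roughly $|\cP|^{1/4}\sqrt{\log n/\eps}$: a vertex is \emph{light} if it has at most $\Delta$ neighbours and \emph{heavy} otherwise. Including each vertex of $V$ independently with probability $\Theta(\tfrac{\log n}{\Delta})$ produces a set $R$ of size $O(\tfrac{n\log n}{\Delta})$ that, with high probability, dominates every heavy vertex; assigning each heavy vertex to an adjacent element of $R$ yields a family $\cC$ of $O(\tfrac{n\log n}{\Delta})$ clusters of radius $1$. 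Into $H$ I put (i) every edge with at least one light endpoint (at most $n\Delta$ edges) and (ii) a spanning star of each cluster (at most $n$ edges). After this, the only edges of any $P_{st}$ that can be missing from $H$ join two heavy vertices, so all of the stretch is localised to the maximal runs of consecutive heavy vertices along $P_{st}$, and such a run of $\ell$ vertices meets between $\ell/3$ and $\ell$ distinct clusters (two vertices of a run in the same radius‑$1$ cluster lie within $2$ positions of each other on the shortest path).

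I now split the pairs according to how many clusters $P_{st}$ meets. If $P_{st}$ meets at least $q^\ast:=\Theta(\tfrac{|R|}{\Delta}\log n)$ clusters, I exploit the slack: I add to $H$ a BFS tree of $G$ rooted at each centre of a second sample $R^\ast\subseteq R$ of size $\Theta(\Delta)$ (a further $O(n\Delta)$ edges), so that with high probability $P_{st}$ meets a cluster whose centre $r\in R^\ast$ lies within distance $1$ of $P_{st}$; then routing $s\rightsquigarrow r\rightsquigarrow t$ inside the two BFS trees gives $\delta_H(s,t)\le\delta_G(s,r)+\delta_G(r,t)\le\delta_G(s,t)+2$. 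If $P_{st}$ meets fewer than $q^\ast$ clusters it has only $O(q^\ast)$ heavy vertices, so it suffices to reconnect, across each of its (few) heavy runs, the handful of cluster centres it visits; the natural device is to collect the relevant centre pairs into an auxiliary set $\cP^\ast\subseteq R\times R$ with $|\cP^\ast|\le\min\{|\cP|,|R|^2\}$ and to run a Coppersmith--Elkin‑type preserver argument \emph{on the cluster graph} — on consistent projections of the $P_{st}$'s — which should contribute $O(|R|\sqrt{|\cP^\ast|})=O(|R|\sqrt{|\cP|})$ edges. Adding the four pieces gives $|H|=O\!\big(n\Delta+\tfrac{n\log n}{\Delta}\sqrt{|\cP|}\big)$, so equating the two terms forces $\Delta=\Theta(\sqrt{\log n}\,|\cP|^{1/4})$ and $|H|=O(n|\cP|^{1/4}\sqrt{\log n})$; tracking the sampling densities carefully — one needs $q^\ast=\Omega(1/\eps)$, so that the $+2$'s contributed by long runs sum to at most $\eps\,\delta_G(s,t)$ — promotes $\sqrt{\log n}$ to $\sqrt{\log n/\eps}$, and a tight accounting of the two BFS detours and the two centre hops brings the additive error down to $4$. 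Both sampling steps run in polynomial time and can be derandomised by standard greedy hitting‑set arguments.

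The delicate part, and where I expect the real work to lie, is the ``fewer than $q^\ast$ clusters'' case. One must show simultaneously that (a) projecting the consistent shortest paths onto the cluster graph leaves them structured enough that the preserver counting yields $O(|R|\sqrt{|\cP|})$ rather than the trivial $O(q^\ast|\cP|)$, and (b) each short heavy run can be crossed inside $H$ at an additive cost of only $O(1)$ — i.e.\ the paths joining consecutive cluster centres along a run are, up to an additive constant, no longer than the critical edges they stand in for — while a shortest path contains few enough heavy runs that these $O(1)$'s never aggregate beyond $\eps\,\delta_G(s,t)+O(1)$. Getting the threshold $q^\ast$, the degree bound $\Delta$, and the cluster radius to cooperate so that (a) and (b) hold at once is the crux; once they do, pinning the additive term at exactly $4$ is routine bookkeeping.
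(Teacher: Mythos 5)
Your plan has a genuine gap at exactly the point you flag as ``the crux,'' and in fact part of what you would need there is false as stated. First, the counting step (a): the Coppersmith--Elkin bound $O(n'\sqrt{|\cP'|})$ relies on the paths being consistent shortest paths, so that any two of them intersect in a single contiguous subpath and branching events can be charged. Projections of your consistent shortest paths onto the cluster graph have neither property: two original paths can pass through the same cluster via different vertices at several non-contiguous places, and the projected paths are not shortest (nor even approximately canonical) in the cluster graph, so the branching-event charging that would give $O(|R|\sqrt{|\cP|})$ does not go through. Moreover, even granting such a cluster-graph preserver, it only controls cluster-graph distances; each cluster-graph hop stands for up to $3$ edges of $G$, so crossing a heavy run through centres costs a constant \emph{multiplicative} factor on that run, not the additive $O(1)$ you need in (b). Second, the stretch accounting in the ``few clusters'' case does not close: a shortest path can alternate light and heavy vertices, so it can have $\Theta(\delta_G(s,t))$ heavy runs even while meeting fewer than $q^\ast$ clusters is violated or not --- the number of runs is simply not bounded by $\eps\,\delta_G(s,t)+O(1)$, and your threshold $q^\ast=\Omega(1/\eps)$ constrains the total number of clusters on the path, not the number of runs. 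Per-run $+O(1)$ errors then aggregate to a constant multiplicative stretch (like $2$ or $3$), not $1+\eps$; note also that in your ``many clusters'' case the detour costs only $+2$ and needs no $\eps$ at all, so the $\eps$ is being spent precisely in the case you have not proved. A further, smaller issue: if you reconnect the centre pairs run by run, the auxiliary demand set has size up to $q^\ast|\cP|$, not $\min\{|\cP|,|R|^2\}$, unless the consistency argument you have not supplied collapses them.

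For contrast, the paper avoids a preserver-on-clusters entirely. After a degree-style clustering (at most $n^{1-\beta}$ clusters, cluster subgraph of size $O(n^{1+\beta})$), it processes the $N=|\cP|$ shortest paths with a path-buying rule: $\val(\rho_i)$ counts cluster pairs whose distance the path improves, and $\rho_i$ is bought iff $\cost(\rho_i)\le\frac{12\log n}{\eps}\sqrt{\val(\rho_i)}$; finally an $O(n)$-size $(2\log n,0)$ multiplicative spanner is added. If a path is not bought, a maximal cluster pair $(C_1,C_2)$ on it is already (approximately) connected in the current graph, the middle is rerouted through it at additive cost $4$, and the prefix and suffix contain at most $x\le\eps w/(2\log n)$ missing edges, each patched by a $2\log n$-length detour in the multiplicative spanner --- this is where $(1+\eps)$ comes from. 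The size bound follows from Cauchy--Schwarz together with the fact that each cluster pair can contribute to $\val$ of at most $5$ bought paths, giving $O\bigl(\frac{\log n}{\eps}n^{1-\beta}\sqrt{N}\bigr)$ bought edges, balanced against $O(n^{1+\beta})$.
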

\begin{theorem}
\label{thm:pairwise2} {\bf (purely-additive pairwise)}
For any integer $k\geq 1$ and any $\cP\subseteq V\times V$, there is a polynomial time algorithm to compute a $(1,4k)$ $\cP$-spanner of size\newline $O(n^{1+1/(2k+1)}(\sqrt{(4k+5)|\cP|})^{k/(2k+1)})$.
\end{theorem}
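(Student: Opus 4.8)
\noindent\emph{Proof plan.} The plan is to realize the spanner as a hierarchy of $k+1$ ``scales'', each paying an additive $+4$, with the bottom scale closed off by a Coppersmith--Elkin pairwise preserver, and then to optimize a single degree parameter $d$ across all scales.

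\emph{One scale ($+4$).} Fix a threshold $d$ and add to $H$ every edge of $G$ incident to a vertex of degree $<d$ (at most $nd$ edges). Call a vertex \emph{heavy} if its degree is at least $d$, and fix a set $S$ of ``hubs'' with $|S|=O(n/d)$ such that every heavy vertex has a neighbour in $S$; record for each heavy $v$ one such neighbour $\sigma(v)\in S$. For a relevant pair $(s,t)$ fix a shortest $s$--$t$ path $P_{st}$ in $G$. If $P_{st}$ contains no heavy vertex then every edge of $P_{st}$ has a light endpoint, so $P_{st}\subseteq H$ and the pair is preserved exactly. Otherwise let $u_s$ and $u_t$ be the first and last heavy vertices on $P_{st}$ and put $s^\star=\sigma(u_s)$, $t^\star=\sigma(u_t)\in S$; the prefix of $P_{st}$ up to $u_s$ and its suffix from $u_t$ use only edges at light vertices, hence lie in $H$ and are preserved exactly, while $\dist_G(s^\star,t^\star)\le\dist_G(u_s,u_t)+2$ and $\dist_G(s,u_s)+\dist_G(u_s,u_t)+\dist_G(u_t,t)=\dist_G(s,t)$. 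Combining these facts, if $H$ also contains a $(1,4(k-1))$-spanner for the pair $(s^\star,t^\star)$, then routing $s\to u_s\to s^\star\to t^\star\to u_t\to t$ gives $\dist_H(s,t)\le\dist_G(s,t)+4k$, the ``$+4$'' paying for the two hub edges $u_s s^\star,u_t t^\star$ and the ``$+2$'' slack. Hence it suffices to additionally build a $(1,4(k-1))$ $\cP_1$-spanner, where $\cP_1=\{(\sigma(u_s),\sigma(u_t)):(s,t)\in\cP\}\subseteq S\times S$ has $|\cP_1|=O(|\cP|)$.

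\emph{Iteration and base case.} Iterating this reduction $k$ times strips off $+4$ each time and leaves a pair set $\cP_k$ with $|\cP_k|=O(|\cP|)$ to be preserved \emph{exactly}; for those we add a Coppersmith--Elkin pairwise preserver, of size $O(n\sqrt{|\cP_k|})$. For the size, every scale contributes its threshold-$d$ light layer; the delicate point is that at each intermediate scale one must also insert enough shortest-path (BFS-tree--like) structure to make the hub-to-hub distances available to the next scale's prefix/suffix argument, and the amount of such structure one is forced to add grows as $d$ shrinks, while the hub set at scale $i$ has size $O(n/d^{\Theta(1)})$. Choosing $d$ to balance the $O(nd)$ light cost against this accumulated hub-connection cost over the $k$ scales plus the bottom-level preserver yields $d=\Theta\bigl((\,\cdot\,)^{1/(2k+1)}\bigr)$ and the stated size $O(n^{1+1/(2k+1)}(\sqrt{(4k+5)|\cP|})^{k/(2k+1)})$, the $(4k+5)$ absorbing the $O(k)$ blow-up from bookkeeping the ``$+2$'' slacks and the hub edges across scales and propagating it through the preserver.

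\emph{Main obstacle.} The heart of the matter is this multi-scale accounting: getting the exponent to come out as $1/(2k+1)$ rather than the $1/(k+1)$ of a naive layering requires pinning down exactly how much shortest-path structure has to be deferred to deeper scales versus inserted immediately, and how the hub sizes and the preserver bound interact under that choice of $d$. A secondary technical point is producing the hub sets deterministically with $|S|=O(n/d)$ (so that no $\log n$ enters the bound), which should be achievable by a Coppersmith--Elkin--style selection of a sparse consistent family of shortest paths in place of a random hitting-set argument.
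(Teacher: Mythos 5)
Your per-scale stretch accounting (the $+4$ from two hub edges plus the $+2$ slack) is fine, and the deterministic hub sets can indeed be obtained from a clustering like the paper's Lemma~\ref{lem:clustering} (cluster centers as hubs). But the proposal has a genuine gap, and I believe the gap is not just an omission but a structural obstruction. First, the entire size analysis --- the part you yourself call ``the heart of the matter'' --- is left unresolved: you never specify what shortest-path structure is inserted at intermediate scales, how the hub-pair sets evolve, or how the balancing produces the exponent $1/(2k+1)$. Second, and more seriously, the architecture as described (strip off $+4$ per scale, finish with a Coppersmith--Elkin preserver that preserves the residual hub pairs \emph{exactly}) cannot reach the claimed bound once $k\ge 2$. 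After any number of scales with threshold(s) $d$, the residual pair set has size at most $\min\{|\cP|,(n/d)^2\}$ but in general no smaller, and iterating with the same $d$ does not shrink it further; the exact preserver then costs $O(\min\{|\cP_k|\sqrt{n},\,n\sqrt{|\cP_k|}\})$, while the light layers cost $\Omega(nd)$. Balancing $nd$ against $n^2/d$ (or $(n/d)^2\sqrt n$) bottoms out around $n^{3/2}$ whenever $|\cP|\ge n$, yet the theorem claims, e.g., $O(n^{1.2}(13|\cP|)^{1/5})=O(n^{1.4})$ for $k=2$, $|\cP|=n$, and roughly $n\,|\cP|^{1/4}$ for $k=\log n$. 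Pushing $d$ high enough to make the residual hub-pair set small enough for the preserver forces the light-edge cost above the target; keeping $d$ low leaves too many pairs for exact preservation. So no choice of thresholds rescues the plan unless the bottom-level exact preservation is abandoned, which is exactly the unspecified ingredient.

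The paper's construction avoids this dead end by never reducing to exact preservation. It keeps the original pairs throughout and runs a path-buying argument: for each pair it maintains a candidate path $\rho_i^j$ whose cost (edges missing from the current spanner) is forced to drop geometrically, $\cost(\rho_i^j)\le 2n^{1-\beta}/\gamma^j$ with $\gamma=(3n^{1-\beta})^{1/k}$, at the price of $+4$ in length per re-routing through a cluster whose distance to another cluster is already realized in the spanner; after at most $k$ re-routings the cost is $0$, so some candidate is always bought. A path is bought as soon as $\cost\le 6\gamma\sqrt{\val}$, where $\val$ counts cluster pairs whose mutual distance the path improves; each cluster pair can pay at most $4k+5$ times (that is where the $4k+5$ comes from, not from slack bookkeeping), and Cauchy--Schwarz over the $N=|\cP|$ bought paths yields the $\sqrt{N}$ factor. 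The exponent $k/(2k+1)$ then emerges from the single constraint $\gamma^k=3n^{1-\beta}$ combined with balancing the clustering cost $n^{1+\beta}$ against $\gamma\sqrt{(4k+5)N}\,n^{1-\beta}$ --- a mechanism quite different from, and not recoverable by, a scale-by-scale hub reduction ending in a preserver.
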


We also consider the relevant special case that all the pairs involve at least one node from a critical set $S\subseteq V$. More precisely, we distinguish two types of such pairwise spanners: in \emph{subsetwise spanners} (see Section \ref{section-s-spanners}) we wish to approximate distances \emph{between} nodes in $S$, i.e. $\cP=S\times S$; in \emph{sourcewise spanners} (see Section \ref{section-sourcewise}) we wish to approximate distances \emph{from} nodes in $S$, i.e. $\cP=S\times V$. We obtain the following improved results for the mentioned cases. 

\begin{theorem}
\label{thm:s-spanners} {\bf (subsetwise)}
For any $S\subseteq V$, there is a polynomial time algorithm to compute a $(1,2)$ $(S\times S)$-spanner of size $O(n\sqrt{|S|})$. 
\end{theorem}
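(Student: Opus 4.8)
The plan is to mimic the classical additive-$2$ spanner construction, adapted so that the budget scales with $|S|$ rather than $n$. Fix the threshold $\Delta:=\lceil\sqrt{|S|}\,\rceil$ and call a vertex \emph{heavy} if its degree in $G$ exceeds $\Delta$ and \emph{light} otherwise. The spanner $H$ will have two parts: (i) the set $E_1$ of all edges of $G$ incident to at least one light vertex, and (ii) the union $\bigcup_{d\in D}T_d$ of breadth-first-search trees of $G$ rooted at the vertices of a small set $D\subseteq V$. Since there are at most $n$ light vertices, each of degree at most $\Delta$, we get $|E_1|\le n\Delta=O(n\sqrt{|S|})$; so the whole point will be to produce $D$ with $|D|=O(\sqrt{|S|})$ for which $H=E_1\cup\bigcup_{d\in D}T_d$ is a valid $(1,2)$ $(S\times S)$-spanner, which then has size $O(n\sqrt{|S|})$.

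The correctness reduction is the routine part. Fix $(s,t)\in S\times S$ and a shortest $s$--$t$ path $P$ in $G$. If every internal vertex of $P$ is light then each edge of $P$ is incident to a light vertex, so $P\subseteq E_1$ and $\delta_H(s,t)=\delta_G(s,t)$. Otherwise let $w$ be the heavy vertex of $P$ nearest $s$: all internal vertices of $P[s..w]$ are light and its last edge $w'w$ has $w'$ light, hence $P[s..w]\subseteq E_1$ and $\delta_{E_1}(s,w)=\delta_G(s,w)$; symmetrically, with $w''$ the heavy vertex of $P$ nearest $t$, $\delta_{E_1}(w'',t)=\delta_G(w'',t)$. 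Now assume $D$ contains, for every heavy vertex, one of its neighbours. Pick $d\in D$ adjacent to $w$; since $T_d$ is a BFS tree, $\delta_H(x,d)=\delta_G(x,d)$ for all $x$, and the triangle inequality gives $\delta_H(w,w'')\le 1+\delta_G(d,w'')\le 1+(1+\delta_G(w,w''))=\delta_G(w,w'')+2$. Together with $\delta_G(s,w)+\delta_G(w,w'')+\delta_G(w'',t)=\delta_G(s,t)$ this yields $\delta_H(s,t)\le\delta_G(s,t)+2$ (the case $w=w''$ being exact). So the only real task is to choose $D$.

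This is where the difficulty lies: a uniformly random, or greedy, set dominating \emph{all} heavy vertices has size only $O((n/\Delta)\log n)=O((n/\sqrt{|S|})\log n)$, which is far too large when $|S|\ll n$. The idea for the fix is that one need not dominate every heavy vertex — only the at most $|S|^2$ "middle'' pairs $(w,w'')$ actually produced above — and moreover that these can all be extracted from a single \emph{consistent} system of canonical shortest paths $\{P_{st}\}_{(s,t)\in S\times S}$ in the Coppersmith--Elkin sense (any two intersecting in a contiguous subpath). For such a system the heavy vertices that can serve as a first/last heavy vertex on some $P_{st}$, and in particular the branching structure among the corresponding $\{P_{st}\}$, is limited, and one should be able to cover/cluster them into $O(\sqrt{|S|})$ groups and take $D$ to be one neighbour per group; one must then recheck that the $+2$ bound above still goes through for every relevant pair with this smaller $D$ (routing through a single root, not between two roots, to keep the slack at $+2$ rather than $+4$). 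Carrying out this counting/clustering step — and verifying that the all-light portions of the $S$--$S$ shortest paths are genuinely captured inside the $E_1$ budget — is the main obstacle; everything else (the degree split, the $|E_1|$ bound, the correctness reduction, and polynomial-time computability of a consistent shortest-path system and of BFS trees) is straightforward.
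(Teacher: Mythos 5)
Your reduction of correctness to the existence of the root set $D$ is the standard additive-2 argument and is fine, but the step you defer -- producing $D$ with $|D|=O(\sqrt{|S|})$ that contains a neighbour of every relevant first/last heavy vertex -- is not merely unfinished: with your routing argument it is unattainable in general. Take $|S|/2$ vertex-disjoint gadgets, the $i$-th being a path $s_i,a_i,w_i,w_i',b_i,t_i$ with $\lceil\sqrt{|S|}\rceil+1$ private pendant leaves attached to each of $w_i$ and $w_i'$ (join the gadgets by a long path elsewhere to make $G$ connected; this fits in $n$ vertices for $|S|=O(n^{2/3})$). Here $w_i,w_i'$ are heavy, everything else is light, and the only edge of the $s_i$--$t_i$ shortest path missing from $E_1$ is $w_iw_i'$. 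Your $+2$ bound needs some $d\in D$ adjacent to the first (or last) heavy vertex of the pair, i.e.\ to $w_i$ or $w_i'$; since these neighbourhoods are pairwise disjoint across gadgets, any such $D$ has $|D|=\Omega(|S|)$, and that many BFS trees cost $\Omega(n|S|)$ edges. A consistent Coppersmith--Elkin path system does not help (the paths are already disjoint), and no clustering of the first-heavy vertices into $O(\sqrt{|S|})$ groups can restore the argument, because each group would still need a common neighbour of all its members to keep the slack at $+2$. What the example really shows is that the ``middle'' heavy--heavy edges must in general be bought individually, and the substance of the theorem is an amortization proving that the total price of such purchases is $O(n\sqrt{|S|})$ -- an ingredient your plan does not contain.

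The paper's construction supplies exactly that ingredient, by a different mechanism (no BFS trees, no domination). It first builds, via Lemma~\ref{lem:clustering} with $n^{\beta}=\sqrt{|S|}$, a cluster subgraph $G_\cC$ with $O(n\sqrt{|S|})$ edges and at most $n/\sqrt{|S|}$ clusters, such that every edge absent from $G_\cC$ joins two clustered vertices and each cluster has diameter at most $2$ inside $G_\cC$; this plays the role of your $E_1$ but carries a cluster structure. It then iterates over the $\binom{|S|}{2}$ shortest paths and buys a path iff its cost (edges still missing) is at most twice its value, where the value counts pairs $(x,C)$ -- $x$ an endpoint of the path, $C$ a cluster meeting the path -- whose current distance the path improves. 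If a path is not bought, Lemma~\ref{lem:num-clusters} guarantees some cluster $C$ on it whose distances to both endpoints are already exact in the current graph, and routing through $C$ (diameter $2$) gives the $+2$ stretch. The size bound is the amortization you are missing: each pair $(x,C)$ can contribute to the value of bought paths at most $3$ times, so the total cost of bought edges is $O(|S|\cdot n/\sqrt{|S|})=O(n\sqrt{|S|})$. If you want to repair your write-up, replace the ``few BFS roots dominate all relevant heavy vertices'' step by such a charging scheme; the degree-threshold-plus-BFS-tree template by itself does not yield the $O(n\sqrt{|S|})$ bound.
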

\begin{theorem}
\label{thm:sourcewise} {\bf (sourcewise)}
For any $S \subseteq V$ and any integer $k \ge 1$, there is a polynomial time algorithm to compute a $(1,2k)$ $(S\times V)$-spanner of size $O(n^{1+1/(2k+1)}(k|S|)^{k/(2k+1)})$ .
\end{theorem}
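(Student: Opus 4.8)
The plan is to build $H$ as a $(2k+1)$-level sampling hierarchy, in the spirit of the construction behind Theorem~\ref{thm:pairwise2} but specialised to exploit that every demand pair of $\cP=S\times V$ emanates from the small set $S$. Adjacent levels will be connected by single-edge ``jumps,'' so that reconstructing an $S$-to-$V$ shortest path in $H$ costs at most $+1$ per change of level and hence $+2k$ in total; the size bound $O(n^{1+1/(2k+1)}(k|S|)^{k/(2k+1)})$ will fall out of choosing the level sizes to equalise the $2k+1$ per-level edge counts. Crucially, where the general-$\cP$ construction of Theorem~\ref{thm:pairwise2} finishes by invoking the Coppersmith--Elkin pairwise preserver of size $O(n\sqrt{|\cP'|})$ on a residual pair set $\cP'$, here the residual demand stays \emph{sourcewise} (all its pairs start from a small residual source set $S'$), and a sourcewise preserver is free: the union of $|S'|$ shortest-path trees has only $O(|S'|\,n)$ edges, beating $O(n\sqrt{|S'|\,n})$ whenever $|S'|<n$. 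Plugging this cheaper base into the template, and re-optimising, both shrinks the spanner and halves the additive stretch (from $4k$ to $2k$), since for a source-anchored demand one endpoint of every path can be reached without detours.

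Concretely, I would start from the basic additive-$2$ gadget: fix a degree threshold $d$, call $v$ \emph{light} if $\deg_G(v)\le d$ and \emph{heavy} otherwise, put into $H$ all edges at light vertices (cost $O(nd)$), take a set $R$ hitting $N(v)$ for every heavy $v$ (obtainable greedily in polynomial time, $|R|=\widetilde O(n/d)$), and add one edge from each heavy $v$ to a chosen $r_v\in R\cap N(v)$ (cost $O(n)$). On a shortest $s$--$t$ path, the prefix up to the first heavy vertex $x$ survives in $H$, and the jump $x\to r_v$ reduces the task to routing from $r_v\in R$ to $t$, with $\delta_G(r_v,t)\le 1+\delta_G(x,t)$; so if $H$ contains a good enough $r_v$--$t$ path, $\delta_H(s,t)\le\delta_G(s,x)+1+\delta_G(r_v,t)+(\text{error so far})$. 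For the all-pairs version one would now add a full BFS tree from every $r\in R$ and pay $\widetilde O(n^{3/2})$; instead I recurse, treating the centres $r_v$ that can actually occur as the source set of the next level, with a larger threshold, and so on for $2k+1$ levels, stopping once the residual source set is small enough that adding one BFS tree per residual source is affordable. The additive error telescopes: level $i$ charges $+1$ on top of the error of the levels above it, and the base level is exact, so $\delta_H(s,t)\le\delta_G(s,t)+2k$ for every $(s,t)\in S\times V$. All ingredients --- BFS, greedy hitting sets --- run in polynomial time.

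For the size, I would set a geometric schedule of thresholds $d_1\le\dots\le d_{2k+1}$ (equivalently, geometrically decreasing level sizes) so that each of the $2k+1$ contributions --- the edges at light vertices of each level, the single-edge jumps, and the final shortest-path forest --- has the same order. The final forest term has the form $|S'|\,n$ with $|S'|$ a function of $|S|$ and the thresholds, and matching it against the $O(n\,d_{2k+1})$ term through the chain of $2k+1$ equalities pins the top threshold at $d_{2k+1}=\Theta\!\big((n\,(k|S|)^{k})^{1/(2k+1)}\big)$, whence the total size $O(n^{1+1/(2k+1)}(k|S|)^{k/(2k+1)})$; the factor $k$ inside $(k|S|)$ absorbs the number of levels and the overhead of the greedy hitting sets.

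The main obstacle is making this recursion genuinely tight, in two fairly independent ways. The stretch bookkeeping is delicate: at each level the $+1$ detour must attach to a vertex whose neighbourhood is hit \emph{at that level}, which forces a careful definition of ``first heavy vertex at level $i$'' relative to the path as already modified by the earlier levels, followed by a proof that the errors telescope to exactly $2k$ rather than to $2k$ times a blow-up factor --- this is also where the improvement over the $4k$ stretch of Theorem~\ref{thm:pairwise2} has to be justified. The size accounting is the other crux: one must choose the geometric schedule so that the per-level edge counts --- which are not uniformly of the shape ``$n\,d_i$'' but also involve the sizes of the residual source sets, themselves determined by the earlier thresholds --- really coincide in order, and it is this $(2k+1)$-fold balance that selects the denominator $2k+1$ and the precise power of $|S|$. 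A secondary point is that the hitting sets at all $2k+1$ levels must be constructible so that ``heavy neighbourhood is hit at its level'' holds simultaneously with only the stated overhead.
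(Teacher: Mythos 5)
Your plan is a genuinely different route from the paper (a $(2k+1)$-level light/heavy--hitting-set hierarchy rather than clustering plus path-buying), but as described it has a gap that I do not see how to close: there is no mechanism by which the size of the construction depends on $|S|$. In your hierarchy the residual source set at the next level is a hitting set for the neighbourhoods of heavy vertices, and such a set has size $\widetilde{\Theta}(n/d)$ regardless of $|S|$; your attempt to salvage this by keeping only ``the centres $r_v$ that can actually occur'' does not work, because even a single source $s\in S$ can give rise to $\Omega(n/d)$ (indeed up to $\Theta(n)$) distinct first-heavy vertices as $t$ ranges over $V$ (e.g.\ when $s$ is adjacent to many heavy vertices), so the residual demand is essentially $R\times V$ with $|R|$ independent of $|S|$. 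Consequently every per-level term in your balancing ($n d_i$ for light edges, $n$ for jumps, $|R_{\mathrm{last}}|\cdot n$ for the final BFS forest) is $|S|$-free, and no choice of the thresholds can produce the factor $(k|S|)^{k/(2k+1)}$; for small $|S|$ the target bound is far below what an $|S|$-free hierarchy can reach. A secondary problem is the stretch arithmetic: each level change costs $+2$, not $+1$ (one for the jump edge $x r_x$ and one because $\delta_G(r_x,t)\le \delta_G(x,t)+1$), so with an exact base you can afford only $k$ recursion levels for additive $2k$, and the telescoping forces each residual centre to serve \emph{all} of $V$ as targets at the next level --- which is exactly the step that destroys the size bound.

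For comparison, the paper gets the $|S|$-dependence not from any per-path decomposition but from a global amortization: after a clustering phase (at most $n^{1-\beta}$ clusters, cluster subgraph of size $O(n^{1+\beta})$, missing-edge and cluster-diameter properties), it processes the shortest paths from $S$ and, for each, iteratively constructs candidate paths $\rho_i^j$ whose cost drops geometrically (by a factor $\gamma=(3n^{1-\beta})^{1/k}$) while the length grows by only $+2$ per iteration; a path is bought when $\cost(\rho_i^j)\le 3\gamma\,\val(\rho_i^j)$, where $\val$ counts clusters whose distance to the source $u_i$ strictly improves. Since a fixed pair (source in $S$, cluster) can have its distance improved at most $2k+3$ times over the whole run, the total bought cost is at most $3\gamma(2k+3)|S|\,n^{1-\beta}$ --- this is precisely where $|S|$ enters, and your proposal has no analogue of this source--cluster amortization. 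If you want to pursue a hierarchical construction, you would need some ingredient that charges added edges to (source, centre/cluster) pairs rather than to levels of a degree hierarchy.
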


In particular, by choosing $k= \log n$, we obtain a $(1,2\log n)$ sourcewise spanner of size  $O(n\sqrt{|S|\log n})$, and a 
$(1,4\log n)$ pairwise spanner of size
$O(n(|\cP|\log n)^{1/4})$.

All our spanners rely on a path-buying strategy which was first exploited in the $(1,6)$ spanner by Baswana et al. \cite{BKMP05}.
The high-level idea is as follows. There is an initial clustering phase, where we compute a suitable clustering of the nodes, and an associated subset of edges which are added to the spanner. Then there is a path-buying phase, where we consider an appropriate sequence of paths, and decide whether to add or not each path in the spanner under construction\footnote{In the spanner from Theorem \ref{thm:pairwise} there is also a final step where we add a  multiplicative $(2\log n,0)$-spanner.}. In particular, each path has a \emph{cost}  which is given by the number of edges of the path not already contained in the spanner, and a \emph{value} which measures \emph{how much} the path helps to satisfy the considered set of constraints on pairwise distances. If the value is sufficiently larger than the cost, we add the considered path to the spanner, otherwise we do not.

In more detail, all our pairwise spanners exploit the same clustering phase. We compute a partition $\cC=\{C_1,\ldots,C_q\}$ of a subset of the nodes, and call \emph{unclustered} the remaining nodes $V-\cup_i C_i$. The initial value of the spanner is $G_{\cC}=(V,E_{\cC})$, where $E_{\cC}$ contains all the edges of $G$ but possibly a subset of the \emph{inter-cluster} edges (with endpoints in two different clusters).
The common clustering phase is described in Section \ref{section-clustering}.


During the path-buying phase we add to the spanner some extra \emph{inter-cluster} edges. Here we need to finely tune the sequence of paths that we consider, and also the definition of value of a path. In our subsetwise and sourcewise spanners the value of a path $\rho$ reflects the number of pairs $(v,C)$, where $v$ is the endpoint of some pair and $C$ is a cluster, such that adding $\rho$ to the current spanner decreases the distance between $v$ and (the closest node in) $C$. In the remaining pairwise spanners, we use a similar notion of value, but considering the distance between pairs of clusters $(C',C'')$. 

The sequence of paths used in our subsetwise spanner and near-additive pairwise spanner is simply given by the shortest paths among the relevant pairs. This naturally generalizes the set of paths considered in \cite{BKMP05}. However, for the sourcewise spanner and the purely-additive pairwise spanner we need to consider a carefully constructed sequence of paths, which includes slightly suboptimal paths. In more detail, we start with the set of shortest paths between the relevant pairs. Then, for each such path $\rho$, if the cost of $\rho$ is sufficiently smaller than its value, we include $\rho$ in the spanner. Otherwise, we replace $\rho$ with a \emph{slightly longer} path $\rho'$ between the same endpoints which is \emph{much cheaper}, and iterate the process on $\rho'$. After a small number of iterations, the considered path becomes cheap enough and hence we include it in the spanner. 
This (non-trivial) iterative construction of candidate paths during the path-buying phase is probably the main algorithmic contribution of this paper.

\subsection{Related Work}
\label{sec:related}

Graph spanners were introduced by Peleg and Schaffer~\cite{PS89} in 1989. Spanners have been extensively 
studied since then, and there are numerous applications involving spanners, such as algorithms for 
approximate shortest paths~\cite{ABCP98,C93,E05}, labeling schemes~\cite{P00,GPPR01}, 
approximate distance oracles~\cite{TZ01,BS04,BK06}, 
routing~\cite{AP92,C01,CW04}, and network design~\cite{PU89}. 

There are several algorithms for computing multiplicative and additive spanners in weighted and 
unweighted graphs. In unweighted graphs, for any integer $k \ge 1$, Halperin and Zwick~\cite{HZ96} 
gave a linear time algorithm to compute a multiplicative $(2k-1,0)$-spanner of size $O(n^{1+1/k})$, 
where $n$ is the number of vertices. 
Note that for $k=\log n$ one obtains a spanner with multiplicative stretch $O(\log n)$ and with size $O(n)$: we will use this type of spanner in Theorem \ref{thm:pairwise}. 
Analogous results are also known for weighted graphs~\cite{BS03,RMZ05,RZ04}.

The first purely-additive spanner (for unweighted graphs) is due to Dor et al.~\cite{DHZ97}. They describe a $(1,2)$ spanner of 
size $O(n^{3/2}\log n)$. This was subsequently improved to $O(n^{3/2})$~\cite{EP04}. Note that our subsetwise spanner from Theorem \ref{thm:s-spanners} generalizes this result: in particular, it has the same stretch function and is sparser whenever $|S|=o(n)$.
Baswana et al. \cite{BKMP05} describe a $(1,6)$-spanner of size $O(n^{4/3})$. Whenever $|\cP|=O(n^{4/3-\delta})$ for some constant $\delta>0$, we achieve an asymptotically sparser pairwise spanner with constant additive stretch (depending on $\delta$). The same holds for our sourcewise spanner if $|S|=O(n^{2/3-\delta})$.


The result in \cite{HZ96} shows an elegant trade-off between the size of the spanner and its multiplicative stretch. No such trade-off is known for purely-additive spanners. In particular, the spanner in \cite{BKMP05} is the sparsest known purely-additive spanner. Theorems \ref{thm:pairwise2} and \ref{thm:sourcewise} show a non-trivial trade-off between the size and additive stretch of pairwise spanners. 

There have also been several results on near-additive spanners~\cite{EP04,E05,TZ06}. For example, there is a $(1+\epsilon, 4)$-spanner of size $O(\frac{n^{4/3}}{\epsilon})$ for any 
$\epsilon > 0$~\cite{EP04}. Our pairwise spanner from Theorem \ref{thm:pairwise} has the same stretch function, and is sparser for $|\cP|=o(n^{4/3}/(\eps\,\log n)^2)$. 

Compared to the preservers in \cite{CE05}, we achieve sparser  pairwise spanners with additive stretch $O(\log n)$ for $|\cP|=\omega(n^{2/3}\log^{1/3} n)$, and a sparser subsetwise spanners for $|S|=\omega(n^{1/3})$. 
Interestingly, our sourcewise spanners are always sparser than the pairwise preservers from \cite{CE05}. 

\section{Clustering}
\label{section-clustering}

A \emph{clustering} of a graph $G=(V,E)$ is a collection $\cC=\{C_1,\ldots,C_q\}$ of pairwise disjoint subsets of nodes $C_i\subseteq V$. Note that we do not require $\cC$ to span all the nodes $V$: we call \emph{unclustered} the nodes $V-\cup_i C_i$.

We will crucially exploit the following construction of a clustering $\cC$ and of an associated \emph{cluster subgraph} $G_\cC$.

\begin{lemma}
\label{lem:clustering}
There is a polynomial time algorithm which,
given $\beta \in [0,1]$ and a graph $G=(V,E)$, computes a clustering $\cC$ with at most $n^{1-\beta}$ clusters and a subgraph $G_\cC$ of size $O(n^{1+\beta})$ such that:
\begin{enumerate}
\item \label{prop:clustering:prop2} {\bf (missing-edge property)} If an edge $uv \in E$ is absent in $G_\cC$, then $u$ and $v$ belong to two different clusters.
  \item \label{prop:clustering:prop3} {\bf (cluster-diameter property)} The distance in $G_\cC$ between any two vertices of the same cluster is at most $2$.
\end{enumerate}
\end{lemma}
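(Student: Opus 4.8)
The plan is to build the clustering by a standard greedy "ball-carving" or "cluster-growing" argument parametrized by a degree threshold $\Delta := n^{\beta}$. First I would take all low-degree vertices — those with degree at most $\Delta$ in $G$ — and put every edge incident to such a vertex into $G_\cC$; this contributes at most $n\Delta = n^{1+\beta}$ edges and, more importantly, it immediately guarantees the \textbf{missing-edge property}, since any edge that is \emph{not} placed in $G_\cC$ has both endpoints of degree exceeding $\Delta$. Next I would cluster exactly the high-degree vertices: repeatedly pick an unclustered high-degree vertex $v$, form a cluster $C$ consisting of $v$ together with (some of) its neighbors, add a star of edges from $v$ to the members of $C$ into $G_\cC$, and mark all of them clustered. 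The key counting step is that each such cluster has size $\ge \Delta+1$ (the center contributes $\ge \Delta$ neighbors), the clusters are pairwise disjoint by construction, and therefore the number of clusters is at most $n/(\Delta+1) \le n^{1-\beta}$. The \textbf{cluster-diameter property} is then immediate: any two vertices of the same cluster $C$ are joined in $G_\cC$ via the center $v$ by a path of length at most $2$.

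One subtlety I would need to handle carefully is the interaction between the two phases: after removing low-degree vertices, a previously high-degree vertex $v$ may see some of its neighbors already clustered (absorbed into an earlier cluster) or removed (they were low-degree). So when I grow the cluster around $v$ I should restrict to the neighbors of $v$ that are still unclustered; I must verify there are still at least $\Delta$ of them. The clean way to do this is to process things in the right order: \emph{first} peel off all low-degree vertices and commit their incident edges; this can only \emph{lower} degrees, so I should instead define "high-degree" with respect to the residual graph, or — cleaner still — note that if $v$ has $>\Delta$ unclustered high-degree neighbors at the moment we pick it, we form its cluster from exactly $\Delta$ of them, and if at some point an unclustered high-degree vertex has $\le \Delta$ unclustered high-degree neighbors remaining, we can simply treat it as low-degree for bookkeeping purposes and dump its (at most $\Delta$) incident residual edges into $G_\cC$. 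Either way the edge count stays $O(n^{1+\beta})$ and every cluster retains size $\ge \Delta+1$.

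The main obstacle — though it is more a matter of careful accounting than of genuine difficulty — is making the case $\beta = 1$ (where $n^{1-\beta}=1$, i.e. a single cluster is allowed and $G_\cC$ may be all of $G$) and the case $\beta = 0$ (where $\Delta = 1$, every cluster is a single vertex or an edge) both come out consistently with the stated bounds, and ensuring the hidden constant in $O(n^{1+\beta})$ absorbs the two separate edge contributions (the low-degree stars and the cluster stars). I would also double-check that unclustered nodes are allowed to have arbitrarily high degree and that this does not break anything: it does not, because the missing-edge property only constrains edges \emph{absent} from $G_\cC$, and any edge incident to an unclustered node was either kept (if the node was declared low-degree) or has both endpoints clustered-or-high-degree, and in the latter case if it is absent both endpoints genuinely lie in distinct clusters. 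Finally, all steps — degree computation, the greedy peeling, and the greedy cluster growing — are clearly polynomial time, which gives the algorithmic claim.
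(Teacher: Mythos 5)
Your plan has a genuine gap: the missing-edge property is never actually secured, and the weakness is in the clustering rule itself, not in the bookkeeping. In your scheme only an \emph{unclustered high-degree} vertex may serve as a center, and only its \emph{unclustered high-degree} neighbours may join its cluster; the leftover case is handled by dumping at most $n^{\beta}$ residual edges towards still-unclustered vertices. But a leftover high-degree vertex $v$ may have arbitrarily many neighbours that were already absorbed into earlier clusters, and the edges from $v$ to those neighbours never enter $G_\cC$: they are not incident to a low-degree vertex, they are not star edges, and they are not among the dumped residual edges. Each such edge is absent from $G_\cC$ while its endpoint $v$ is unclustered, contradicting the missing-edge property. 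Nor can you simply dump these edges as well: if half the vertices form disjoint cliques of size roughly $n^{\beta}$ and each vertex of the other half is adjacent to all clique vertices and to nothing else, then an (adversarially ordered) run of your procedure clusters the cliques among themselves and leaves every vertex of the second half unclustered with $\Theta(n)$ absent edges into clusters, i.e.\ $\Theta(n^{2})$ edges overall, far above $O(n^{1+\beta})$. What your stopping rule fails to guarantee is that \emph{every} vertex of $G$ ends up with few unclustered neighbours, and that is exactly what the paper's construction enforces: any vertex $v\in V$ (clustered or not, of any degree) that still has at least $n^{\beta}$ unclustered neighbours spawns a new cluster consisting of $n^{\beta}$ of those neighbours (the members need not be high-degree or mutually adjacent, and the common neighbour $v$ need not itself join the cluster); when no such $v$ remains, every vertex has fewer than $n^{\beta}$ unclustered neighbours, so adding \emph{all} edges incident to unclustered nodes costs only $O(n^{1+\beta})$ and ensures every absent edge has both endpoints clustered.

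A second, smaller flaw: inside a cluster you add only the star from the center, so an edge of $G$ joining two non-center members of the same cluster is absent from $G_\cC$ although its endpoints lie in the \emph{same} cluster rather than in two different ones; this again violates the property as stated, and the stronger form is genuinely used later (in Lemma~\ref{lem:num-clusters} intra-cluster edges along a shortest path must be present in $G_\cC$). The fix is cheap and is what the paper does: add all edges with both endpoints in $C\cup\{v\}$, at cost $O(n^{2\beta})$ per cluster, hence $O(n^{1+\beta})$ in total. Your remaining ingredients --- the cluster count bound via disjointness and cluster size at least $n^{\beta}$, the diameter-$2$ bound through the common neighbour, and the $O(n^{1+\beta})$ budget for edges at low-degree vertices --- are fine.
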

\begin{proof}
Let $U$ be the set of nodes which are not yet clustered (initially we set $U:=V)$.
As long as there exists a vertex $v \in V$ with at least $\lceil n^\beta\rceil$ neighbors in $U$, let $C$ contain exactly $\lceil n^\beta \rceil$ arbitrary
neighbors of $v$ in $U$.
Add $C$ to $\cC$, set $U := U\setminus C$ and add to $G_\cC$ 
all the edges of $G$ with both endpoints in $C \cup \{v\}$. 
When no node $v$ satisfies the mentioned property, we stop creating new clusters and add to $G_\cC$ all the edges incident to the final set of unclustered nodes $U$.

By construction, clusters are pairwise disjoint. Each time we create a new cluster, the size of $U$ decreases by at least $n^\beta$, hence there cannot be more than $n^{1-\beta}$ clusters. Any two nodes in the same cluster $C$ have some common neighbor $v$ in $G_\cC$, hence Property \ref{prop:clustering:prop3} is satisfied. By construction, all the edges incident to unclustered nodes plus the intra-cluster edges (with both endpoints in the same cluster) belong to $G_{\cC}$, which implies Property \ref{prop:clustering:prop2}.

It remains to bound the number of edges of $G_{\cC}$. Each time we create a new cluster, the number of edges of $G_{\cC}$ grows by at most $O(n^{2\beta})$: this gives $O(n^{1-\beta}n^{2\beta})=O(n^{1+\beta})$ edges altogether. When we stop creating clusters, each (clustered or unclustered) node $v$ has at most $n^{\beta}$ neighbors in $U$: consequently the number of edges incident to unclustered nodes that we add at the end of the procedure is at most $O(n^{1+\beta})$.  
\end{proof}

The following technical lemma turns out to be useful in the remaining sections.
\begin{lemma}
\label{lem:num-clusters}
Let $\cC$ and $G_\cC$ be constructed with the procedure from Lemma \ref{lem:clustering} w.r.t. a given graph $G=(V,E)$. If the shortest path $\rho$ in $G$ between any two nodes $u,v\in V$ contains $t$ edges that are absent in $G_\cC$, 
then there are at least $t/2$ clusters of $\cC$
having at least one vertex on $\rho$.
\end{lemma}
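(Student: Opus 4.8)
The plan is to combine the two properties of the clustering from Lemma~\ref{lem:clustering} with the elementary fact that, since $\rho$ is a shortest path of $G$, every subpath of $\rho$ is itself a shortest path of $G$ between its endpoints.

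First I would establish a \emph{local-structure claim}: for any cluster $C\in\cC$, all vertices of $C$ lying on $\rho$ occupy at most three \emph{consecutive} vertices of $\rho$. To see this, let $x$ and $y$ be the first and the last vertex of $C$ encountered while traversing $\rho$. By the cluster-diameter property, $\delta_{G_\cC}(x,y)\le 2$, and since $G_\cC\subseteq G$ this gives $\delta_G(x,y)\le 2$. Because the portion of $\rho$ between $x$ and $y$ is a shortest path of $G$, it has length at most $2$, hence it consists of at most three vertices; in particular all vertices of $C$ on $\rho$ lie among these three. Consequently at most four edges of $\rho$ are incident to vertices of $C$.

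Next I would set up a double counting. Let $\mathcal{W}\subseteq\cC$ be the set of clusters that contain an endpoint of some missing edge of $\rho$; every such cluster clearly has a vertex on $\rho$, so it suffices to prove $|\mathcal{W}|\ge t/2$. On the one hand, the missing-edge property says that each of the $t$ missing edges on $\rho$ has both endpoints clustered, and in two \emph{distinct} clusters, both of which therefore belong to $\mathcal{W}$; summing, over $C\in\mathcal{W}$, the number of missing edges with an endpoint in $C$ thus yields exactly $2t$. On the other hand, a missing edge with an endpoint in a fixed $C\in\mathcal{W}$ must be one of the at most four edges of $\rho$ incident to $C$, so each $C$ contributes at most $4$ to that same sum. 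Combining the two bounds gives $2t\le 4|\mathcal{W}|$, i.e.\ $|\mathcal{W}|\ge t/2$, which is the desired conclusion.

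The argument is short, and the only delicate point — the "main obstacle", such as it is — is the local-structure claim: one must be careful to transfer the distance-$\le 2$ bound first from $G_\cC$ to $G$ (using $G_\cC\subseteq G$, so distances can only shrink) and then onto $\rho$ (using that subpaths of a shortest path are shortest). Once that is in place, the rest is pure bookkeeping.
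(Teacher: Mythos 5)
Your proof is correct and follows essentially the same route as the paper: a double counting of incidences between clusters and the endpoints of missing edges, with the same per-cluster cap of $4$ obtained from the cluster-diameter property (all vertices of a cluster on $\rho$ lie in a window of at most three consecutive vertices, since subpaths of a shortest path are shortest) and the missing-edge property guaranteeing both endpoints of each missing edge are clustered, in distinct clusters. The only cosmetic difference is that the paper derives the cap of $4$ via a short case analysis on whether a cluster has exactly three (necessarily consecutive) vertices on $\rho$, whereas you bound the number of edges of $\rho$ incident to the three-vertex window directly; the conclusion $2t\le 4x$ is identical.
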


\begin{proof}
We prove the lemma by counting pairs $(u,e)$, where $e$ is an edge of $\rho$
absent in $G_\cC$ and $u$ is one of the endpoints of $e$: let $\cS$ be the set of such pairs.
Since $\rho$ contains $t$ edges that are absent in $G_\cC$ there are exactly $2t$
pairs in $\cS$ (each edge $e = uv$ belongs to two pairs: 
$(u,e)$ and $(v,e)$).
We say that a cluster $C \in \cC$ {\em owns} a pair $(u,e)$ if $u \in C$.
By the missing-edge property, each edge $e$ of $\rho$ absent in $G_\cC$ has both endpoints clustered,
hence each pair of $\cS$ is owned by some cluster.

Let us assume that there are $x$ clusters of $\cC$ having
at least one vertex on $\rho$.
By the cluster-diameter property any cluster $C \in \cC$ contains at most $3$ vertices on $\rho$,
since otherwise $\rho$ would not be a shortest path between $u$ and $v$.
However, if a cluster $C \in \cC$ contains exactly $3$ vertices on $\rho$,
those have to be consecutive vertices $a,b,c$ of $\rho$, since $\rho$ is a shortest path and we know by the cluster-diameter property that there is a path of length at most 2 between every 
pair in $\{a,b,c\}$. By the missing-edge property
both edges $ab$ and $bc$ are present in $G_\cC$, and consequently $C$
owns at most two pairs of $\cS$.
Clearly if a cluster $C \in \cC$ contains at most $2$ vertices on $\rho$,
then it owns at most $4$ pairs of $\cS$.
Therefore each cluster owns at most $4$ pairs of $\cS$:  
since $\cS$ has $2t$ pairs we have $x \ge t/2$.
\end{proof}

\section{Subsetwise Spanners}
\label{section-s-spanners}

In this section we present our algorithm to compute a subsetwise spanner, and prove Theorem~\ref{thm:s-spanners}. 

Our algorithm 
consists of two main phases: a clustering phase and a path-buying phase. In the clustering phase we invoke Lemma~\ref{lem:clustering} 
and obtain a cluster subgraph $G_\cC$ of $G$ of size $O(n^{1+\beta})$,
together with a set $\cC$ of at most $n^{1-\beta}$ clusters. The value of $\beta$ will be defined later.

In the path-buying phase we proceed as follows. 
Initially set $G_0:=G_\cC$ and let $\{\rho_1, \ldots, \rho_z \}$,  denote
the set of $z=\binom{|S|}{2}$ shortest paths between all pairs of vertices in $S$. We let $(u_i,v_i)$ denote the endpoints of $\rho_i$.
Next, we iterate over the paths $\rho_i$ for $i=1,\ldots,z$. 
To determine which paths are affordable, we define the functions $\val(\cdot)$
and $\cost(\cdot)$:
\begin{itemize}
\item let $\cost(\rho_i)$ be the number of edges of $\rho_i$ that are absent in $G_{i-1}$
\item let $\val(\rho_i)$ be the number of pairs $(x,C)$, where $x \in \{u_i, v_i\}$ and 
$C \in \cC$ is a cluster, such that $\rho_i$ contains at least one vertex of $C$
and the distance between $x$ and $C$ in the graph $G_{i-1}$
is strictly greater than the distance between $u$ and $C$ in $\rho_i$, 
i.e., $\delta_{G_{i-1}}(x,C) > \delta_{\rho_i}(x,C)$.
\end{itemize}



Our path-buying strategy is as follows. If 
$$\cost(\rho_i)\leq 2\,\val(\rho_i)$$ 
then we buy the path $\rho_i$, that is we set $G_i:=G_{i-1}\cup \rho_i$ 
(in words, $G_i$ is given by $G_{i-1}$ plus the edges of $\rho_i$ not in $G_{i-1}$).
Otherwise (i.e., $2\val(\rho_i) < \cost(\rho_i)$), we do not buy $\rho_i$
and set $G_i:=G_{i-1}$.
The subsetwise spanner is given by $G_s:=G_z$.

The next two lemmas bound the stretch and the size of the constructed spanner $G_s$, respectively

\begin{lemma}
\label{lem:s-1}
For any $(u_i,v_i) \in \cP$,  $\dist_{G_s}(u_i,v_i) \le \dist_G(u_i,v_i)+2$.
\end{lemma}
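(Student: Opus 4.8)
The plan is to fix a pair $(u_i,v_i)\in\cP$ and analyze the shortest path $\rho_i$ between them, distinguishing two cases according to whether $\rho_i$ was bought. If $\rho_i$ was bought in iteration $i$, then all its edges lie in $G_i\subseteq G_s$, so $\dist_{G_s}(u_i,v_i)=\dist_G(u_i,v_i)$ and we are done with stretch $0$. So assume $\rho_i$ was not bought, i.e. $2\val(\rho_i)<\cost(\rho_i)$. The idea is then to walk along $\rho_i$ and show that every "missing" piece of it can be shortcut cheaply in $G_{i-1}$ (hence in $G_s$), paying a small additive overhead only at a bounded number of places.

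The key structural observation is the following. Let $t=\cost(\rho_i)$ be the number of edges of $\rho_i$ absent in $G_{i-1}$. By Lemma~\ref{lem:num-clusters}, at least $t/2$ clusters of $\cC$ have a vertex on $\rho_i$. Now for each such cluster $C$ and each endpoint $x\in\{u_i,v_i\}$, either $(x,C)$ contributes to $\val(\rho_i)$, or else $\delta_{G_{i-1}}(x,C)\le\delta_{\rho_i}(x,C)$. Since $2\val(\rho_i)<t$, there are strictly fewer than $t$ contributing pairs $(x,C)$ in total (over both endpoints), hence strictly fewer than $t/2$ clusters $C$ on $\rho_i$ for which some endpoint $x$ has $\delta_{G_{i-1}}(x,C)>\delta_{\rho_i}(x,C)$; combined with the $\ge t/2$ count, there must exist a cluster $C^*$ meeting $\rho_i$ such that for \emph{both} endpoints $x\in\{u_i,v_i\}$ we have $\delta_{G_{i-1}}(x,C^*)\le\delta_{\rho_i}(x,C^*)$. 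Pick a vertex $w\in C^*$ lying on $\rho_i$.

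Now bound $\dist_{G_s}(u_i,v_i)\le\dist_{G_{i-1}}(u_i,v_i)\le \delta_{G_{i-1}}(u_i,C^*)+\dist_{G_{i-1}}(w',w'')+\delta_{G_{i-1}}(w'',v_i)$ where we route through $C^*$: go from $u_i$ to its closest vertex $w'$ of $C^*$ in $G_{i-1}$, cross inside $C^*$ from $w'$ to $w$ (cost at most $2$ by the cluster-diameter property, since intra-cluster distances in $G_\cC\subseteq G_{i-1}$ are $\le 2$), and then from $w$ to $v_i$. Using the case hypothesis, $\delta_{G_{i-1}}(u_i,C^*)\le\delta_{\rho_i}(u_i,C^*)=\delta_{\rho_i}(u_i,w)$ and similarly $\delta_{G_{i-1}}(v_i,C^*)\le\delta_{\rho_i}(v_i,w)$. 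But $w$ lies on $\rho_i$, so $\delta_{\rho_i}(u_i,w)+\delta_{\rho_i}(w,v_i)=\dist_G(u_i,v_i)$, giving $\dist_{G_s}(u_i,v_i)\le\dist_G(u_i,v_i)+2$, as claimed.

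The main obstacle is making the counting argument of the second paragraph airtight: one must be careful that $\val$ counts \emph{pairs} $(x,C)$ rather than clusters, so the bound translating "$\val$ is small" into "many clusters on $\rho_i$ are good for both endpoints" needs the factor-$2$ in the path-buying threshold to match the factor-$2$ loss in Lemma~\ref{lem:num-clusters}; the inequalities $t/2 \le \#\{\text{clusters on }\rho_i\}$ and $\#\{\text{bad }(x,C)\text{ pairs}\} = \val(\rho_i) < t/2$ must be combined to force existence of a cluster that is good for both endpoints simultaneously, and one should double-check the edge case where $t$ is small or $\rho_i$ has no missing edges (in which case $\rho_i\subseteq G_{i-1}$ trivially).
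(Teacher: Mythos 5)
Your proof is correct and follows essentially the same route as the paper: use Lemma~\ref{lem:num-clusters} to get at least $\cost(\rho_i)/2$ clusters on $\rho_i$, compare with $\val(\rho_i)<\cost(\rho_i)/2$ to extract a cluster $C^*$ that is ``good'' for both endpoints, and then route $u_i$--$C^*$--$v_i$ in $G_{i-1}$ paying $+2$ via the cluster-diameter property. The only (immaterial) differences are that you state the good-cluster condition directly as $\delta_{G_{i-1}}(x,C^*)\le\delta_{\rho_i}(x,C^*)$, which if anything matches the definition of $\val$ more literally than the paper's phrasing in terms of $\delta_G(x,C^*)$, plus a harmless notational slip between $w$ and $w''$ in the final routing.
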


\begin{proof}
Clearly the claim holds if our algorithm bought the path $\rho_i$,
hence we assume $2\val(\rho_i) < \cost(\rho_i)$.
Let $\cost(\rho_i)=t$, that is there are exactly $t$
edges of $\rho_i$ which are not present in the graph $G_{i-1}$.
By Lemma~\ref{lem:num-clusters} there are at least $t/2$ clusters having
at least one vertex on $\rho_i$.
If there is no cluster $C$ among them such that 
$\dist_{G_{i-1}}(u_i,C) = \dist_{G}(u_i,C)$ and
$\dist_{G_{i-1}}(v_i,C) = \dist_{G}(v_i,C)$,
then all these clusters would contribute to $\val(\rho_i)$ (either with $u_i$ or with $v_i$ or both)
which leads to a contradiction, because $t = 2 \cdot (t/2) \le 2\val(\rho_i) < \cost(\rho_i) = t$.

Thus there is a cluster $C$ having a vertex of $\rho_i$ such that 
$\delta_{G_{i-1}}(u_i,C)=\delta_{G}(u_i,C)$ and
$\delta_{G_{i-1}}(v_i,C)=\delta_{G}(v_i,C)$. This implies:
\begin{eqnarray*}
\delta_{G_s}(u_i,v_i) \ \le \ \delta_{G_{i-1}}(u_i,v_i) & \le & \delta_{G_{i-1}}(u_i,C) + \delta_{G_{i-1}}(v_i,C)+2 \\
& \le & \delta_{G}(u_i,C)+\delta_{G}(v_i,C)+2\\
&\le  & \delta_G(u_i,v_i)+2,
\end{eqnarray*}
where the first inequality is because $G_{i-1}$ is a subgraph 
of $G_s$,
the second inequality holds since any two vertices of $C$ are at distance at most two in $G_\cC \subseteq G_{i-1}$ (by the cluster-diameter property)
and the last inequality follows from the assumption that $C$ contains a vertex of $\rho_i$.
\end{proof}

\begin{lemma}
\label{lem:s-2}
For $\beta$ such that $n^{\beta}=\sqrt{|S|}$ 
the graph $G_s$ contains $O(n\sqrt{|S|})$ edges.
\end{lemma}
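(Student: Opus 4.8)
The plan is to split the edges of $G_s$ into two groups: the edges of the initial cluster subgraph $G_\cC$, and the extra inter-cluster edges added during the path-buying phase. For the first group, Lemma~\ref{lem:clustering} immediately gives $O(n^{1+\beta})$ edges, and since $n^\beta=\sqrt{|S|}$ this is $O(n\sqrt{|S|})$, as desired. So the whole burden is to bound the number of edges bought during the path-buying phase.

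For the path-buying edges, the idea is a charging argument driven by the acceptance rule $\cost(\rho_i)\le 2\val(\rho_i)$. Whenever we buy a path $\rho_i$, we add exactly $\cost(\rho_i)\le 2\val(\rho_i)$ new edges to the spanner, so the total number of bought edges is at most $2\sum_{i\ :\ \rho_i\ \mathrm{bought}}\val(\rho_i)$. Now I would argue that $\val$ behaves like a potential: recall $\val(\rho_i)$ counts pairs $(x,C)$ with $x\in\{u_i,v_i\}$ and $C$ a cluster hit by $\rho_i$ for which buying $\rho_i$ strictly decreases $\delta(x,C)$. Define, for each ordered pair $(x,C)$ with $x$ an endpoint of some pair in $\cP$ and $C\in\cC$, a potential $\Phi_i(x,C)=\delta_{G_i}(x,C)$; this is a nonnegative integer, it is nonincreasing in $i$ because $G_{i}\supseteq G_{i-1}$, and each unit counted by $\val(\rho_i)$ corresponds to a strict drop of at least $1$ in some $\Phi(x,C)$ when we pass from $G_{i-1}$ to $G_i$. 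Hence $\sum_i \val(\rho_i)$ is at most the total possible decrease of $\sum_{(x,C)}\Phi(x,C)$, i.e. at most $\sum_{(x,C)}\delta_{G_\cC}(x,C)$.

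The remaining step is to bound this total initial potential. The number of relevant endpoints $x$ is at most $2|S|$ (the endpoints of pairs in $S\times S$), and there are at most $n^{1-\beta}$ clusters, so there are $O(|S|\,n^{1-\beta})$ pairs $(x,C)$. Here one must be slightly careful: $\delta_{G_\cC}(x,C)$ need not be bounded by a constant, so a naive bound gives an extra factor of $n$. The fix is the standard observation that only \emph{finite, bounded} distances matter for the count: a pair $(x,C)$ can contribute to $\val$ at most as many times as its distance can strictly decrease, but we only need the distances that are ever relevant — and in fact it suffices to note that each $\val(\rho_i)$ counts only clusters $C$ that actually lie on the shortest path $\rho_i$, whose length is at most $n$, together with the fact that along $\rho_i$ the values $\delta_{\rho_i}(x,C)$ are all distinct integers in $\{0,\dots,|\rho_i|\}$; combined with Lemma~\ref{lem:num-clusters}-type reasoning this shows $\val(\rho_i)=O(\sqrt{|S|})$ is the wrong target and instead one should bound $\sum_i\val(\rho_i)$ directly by $O(|S|\cdot n^{1-\beta}\cdot 1)=O(|S|n^{1-\beta})$ using that each $(x,C)$ potential, \emph{while it can be bought}, drops by at least one and the total useful drop per pair before the distance matches $\delta_G(x,C)$ is what is charged — and the cleanest accounting simply caps the contribution of each pair $(x,C)$ at $O(1)$ per \emph{distinct} path endpoint. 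I expect this last accounting to be the main obstacle: making precise why $\sum_i \val(\rho_i)=O(|S|\,n^{1-\beta})$ rather than something larger. Granting it, the path-buying phase adds $O(|S|\,n^{1-\beta})$ edges, which with $n^\beta=\sqrt{|S|}$ equals $O(|S|\cdot n/\sqrt{|S|})=O(n\sqrt{|S|})$. Adding the two groups gives the claimed $O(n\sqrt{|S|})$ total, and optimizing shows $\beta$ with $n^\beta=\sqrt{|S|}$ balances the two terms.
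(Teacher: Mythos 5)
Your decomposition (clustering edges versus bought edges) and the charging setup are exactly the paper's: the clustering phase contributes $O(n^{1+\beta})=O(n\sqrt{|S|})$ edges, and the bought edges number at most $\sum_{\rho_i\ \mathrm{bought}}\cost(\rho_i)\le 2\sum_i\val(\rho_i)$. But there is a genuine gap exactly where you flag it: you never establish that each pair $(x,C)$ contributes only $O(1)$ to $\sum_i\val(\rho_i)$. Your potential argument bounds the total contribution of a fixed pair $(x,C)$ only by its initial distance $\delta_{G_\cC}(x,C)$, which can be $\Theta(n)$, giving $O(|S|\,n^{1-\beta}\cdot n)$ --- an extra factor of $n$ --- and the subsequent attempts (``caps the contribution of each pair at $O(1)$ per distinct path endpoint'') are assertions rather than arguments; you yourself say this accounting is the main obstacle.

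The missing observation is short, and it is the heart of the paper's proof. A pair $(x,C)$ is counted by $\val(\rho_i)$ only when $C$ contains a vertex $c$ of $\rho_i$, and $\rho_i$ is a shortest path in $G$ with endpoint $x$; so the subpath of $\rho_i$ from $x$ to $c$ has length $\delta_G(x,c)\le \delta_G(x,C)+2$ by the cluster-diameter property. Hence the \emph{first} time $(x,C)$ contributes to a bought path, the spanner distance from $x$ to $C$ drops to at most $\delta_G(x,C)+2$. Every later contribution requires $\delta_{G_{i-1}}(x,C)>\delta_{\rho_i}(x,C)\ge\delta_G(x,C)$ and, upon buying, forces a further strict decrease of an integer quantity that can never fall below $\delta_G(x,C)$; so each pair contributes at most $3$ times in total. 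This yields $\sum_i\val(\rho_i)\le 3\cdot 2|S|\cdot n^{1-\beta}=O(|S|\,n^{1-\beta})$, which is precisely the bound you were missing, and with $n^\beta=\sqrt{|S|}$ the path-buying phase adds $O(n\sqrt{|S|})$ edges, completing the proof along the same route as the paper.
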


\begin{proof}
The clustering phase produces a graph with $O(n^{1+\beta})=O(n\sqrt{|S|})$ edges.
Let $\cB$ be the set of paths bought in the path-buying phase.
The total number of edges that appear in $\cB$ and do not appear in $G_\cC$
is equal to $\sum_{\rho_i \in \cB} \cost(\rho_i)$, which is 
upper bounded by $\sum_{\rho_i \in \cB} 2\val(\rho_i)$.
Observe that after the first contribution of a pair $(x,C)$
to the above sum, the distance between $x$ and $C$ is at most $\dist_G(x,C)+2$,
hence each pair $(x,C)$ can contribute to the sum at most $3$ times.
Therefore the total number of edges added in the second phase of our algorithm
is upper bounded by $O(n^{1-\beta}|S|)=O(n\sqrt{|S|})$.
\end{proof}

The proof of Theorem~\ref{thm:s-spanners} follows from Lemmas~\ref{lem:s-1} and \ref{lem:s-2}.


\section{Sourcewise spanners}
\label{section-sourcewise}

In this section we present our algorithm to compute a sourcewise spanner from sources $S$, and prove Theorem \ref{thm:sourcewise}. 

Our algorithm again consists of two phases, where the first 
is a clustering phase and the second is a path-buying phase.
The clustering phase is as in the algorithm from previous section, for a proper value of $\beta$ to be defined later. Let $\cC$ and $G_\cC$ be the resulting clustering and cluster subgraph. 

At the start of the second phase we set $G_0:=G_\cC$
and define $\{\rho_1, \ldots, \rho_z\}$ as the 
set of shortest paths between any two vertices of $V$
such that at least one of them belongs to $S$.
Let us assume that the path $\rho_i$ is a shortest path 
between $u_i \in S$ and $v_i \in V$.
Next, we iterate over paths $\rho_i$ for $i=1,\ldots,z$.
For a given $i$ we are going to define paths $\rho_i^j$, where $0 \le j \le k$,
maintaining the following invariants:\smallskip

\begin{itemize}
  \item[(i)] $\rho_i^j$ is a path between $u_i$ and $v_i$ in $G$ of length at most $\dist_G(u_i,v_i)+2j$,
  \item[(ii)] any cluster $C \in \cC$ contains at most three vertices of $\rho_i^j$,
  \item[(iii)] $\cost(\rho_i^j) \le 2n^{1-\beta} / \gamma^j$, where $\cost(\rho_i^j)$ is the number of edges of $\rho_i^j$ absent in $G_{i-1}$,
and $\gamma=(3n^{1-\beta})^{1/k}$.
\end{itemize}

\smallskip\noindent Our algorithm will buy exactly one path $\rho_i^j$ for $0 \le j \le k$, which will ensure 
(by Invariant (i)) that in $G_{i}$, the distance between $u_i$ and $v_i$ is at most $\dist_G(u_i,v_i)+2k$.

We set $\rho_i^0 := \rho_i$. 
Observe that for $j=0$, Invariant (i) is trivially satisfied,
Invariant (ii) is satisfied by the cluster-diameter property  
(otherwise $\rho_i$ would not be a shortest path),
and Invariant (iii) is satisfied because there are at most $n^{1-\beta}$ clusters in $\cC$
and consequently by Lemma~\ref{lem:num-clusters} $\cost(\rho_i) \le 2n^{1-\beta}$. 

Say we have constructed $\rho_i^j$, where $j \in \{0,\ldots,k\}$.
Let us define the function $\val(\rho_i^j)$ as the number of clusters $C \in \cC$ 
such that $C$ contains a vertex of $\rho_i^j$ and the distance
between $u_i$ and $C$ in $G_{i-1}$ is strictly greater than
the distance between $u_i$ and $C$ in $\rho_i^j$, i.e., $\dist_{G_{i-1}} (u_i,C) > \dist_{\rho_i^j}(u_i,C)$.
Now we check the condition 
$$\cost(\rho_i^j) \le 3\gamma\val(\rho_i^j).$$ 

If that is the case, then we buy the path $\rho_i^j$. That is, $G_i$ is set to $G_{i-1} \cup \rho_i^j$.
We ignore the remaining values of $j$ and proceed with the next value of $i$.
Else we construct $\rho_i^{j+1}$ as follows:

Let $R$ be the longest suffix of $\rho_i^j$ containing exactly $\lfloor \cost(\rho_i^j)/\gamma \rfloor$ edges that are absent in $G_{i-1}$.
Observe that the first node of $R$ is clustered:
by the maximality of $R$, the edge $e$ of $\rho_i^j$ preceding $R$
is absent in $G_{i-1}$, and hence both the endpoints of $e$ (one of which is the first node of $R$) are clustered by the missing-edge property of $G_\cC$. 
Consequently at least $1+\lfloor \cost(\rho_i^j)/\gamma \rfloor \ge \cost(\rho_i^j)/\gamma$ 
vertices of $R$ are clustered, as $R$ contains $\lfloor \cost(\rho_i^j)/\gamma \rfloor$ edges
absent in $G_{i-1}$ and the endpoints of these edges are clustered. 

By Invariant (ii) there are at least $\cost(\rho_i^j)/(3\gamma)$ clusters in $\cC$
having at least one vertex of $R$. Since we did not buy $\rho_i^j$, there
exists a cluster $C \in \cC$ containing a vertex $x \in C$ of $R$ such that the distance between $u_i$ 
and $C$ in $G_{i-1}$ is at most the distance between $u_i$ and $x$ in $\rho_i^j$. We construct the path $\rho_i^{j+1}$ by taking
a shortest path in $G_{i-1}$ from $u_i$ to the closest node $y\in C$, then we add a path of length at most two between $y$ and $x$ (which exists in $G_\cC$ hence in $G_{i-1}$ by the cluster-diameter property), and finally add the suffix of $R$ starting at $x$ (see Fig.~\ref{fig1}).


Let us show that $\rho^{j+1}_i$ maintains the invariants. Note that by construction, Invariant (i) is satisfied, since the length of $\rho_i^{j+1}$ is at most the length of $\rho_i^j$ plus $2$.
Then, as long as there is a cluster $C \in \cC$ containing at least four vertices on $\rho_i^{j+1}$,
we let $a$, $b$ be the vertices of $\rho_i^{j+1}$ closest to $u_i$ and $v_i$ respectively.
Note that there are at least three edges on $\rho_i^{j+1}$ between $a$ and $b$, hence we
can replace the subpath of $\rho_i^{j+1}$ by adding the at most two edges of $G_\cC$
guaranteed by the cluster-diameter property.
Consequently, Invariant (ii) is satisfied. Moreover, by the choice of $R$, Invariant (iii) is also satisfied.
This finishes the construction of $\rho_i^{j+1}$. 

\begin{figure}[t]
\begin{center}
\includegraphics{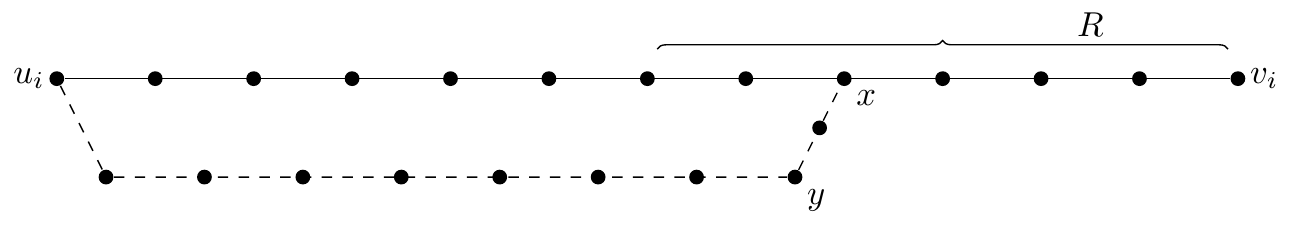}
\end{center}
\caption{The solid edges represent a path $\rho_i^j$, while the dashed edges
  denote the new prefix of the path $\rho_i^{j+1}$.}
\label{fig1}
\end{figure}

Observe that by Invariant (iii) we have $\cost(\rho_i^k) \le 2/3$: since 
$\cost(\cdot)$ has only integral values, it has to be that $\cost(\rho_i^k) = 0$,
which ensures that we buy a path $\rho_i^j$ for some $j \le k$.

Finally, as our spanner $G_s$ we take $G_s := G_z$.

%
%
%
%

\begin{lemma}\label{lem:stretchSourcewise}
For any pair $(u_i,v_i) \in \cP$,  $\dist_{G_s}(u_i,v_i) \le \dist_G(u_i,v_i)+2k$.
\end{lemma}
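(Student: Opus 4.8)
The plan is to fix a pair $(u_i,v_i)\in\cP$ and show that after processing index $i$ in the path-buying phase, the graph $G_i$ (hence $G_s\supseteq G_i$) already contains a $u_i$-$v_i$ path of length at most $\dist_G(u_i,v_i)+2k$. By the discussion preceding the lemma, during iteration $i$ the algorithm constructs paths $\rho_i^0,\rho_i^1,\ldots$ and buys exactly one of them, say $\rho_i^{j^\star}$ with $j^\star\le k$; this is guaranteed because Invariant (iii) forces $\cost(\rho_i^k)=0$, so if no earlier path is bought then $\rho_i^k$ itself (being entirely contained in $G_{i-1}$, up to zero missing edges) satisfies the buying condition $\cost\le 3\gamma\,\val$ trivially. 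First I would record this: the algorithm is well-defined and terminates each iteration having bought some $\rho_i^{j^\star}$.

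Next I would invoke Invariant (i) for $\rho_i^{j^\star}$: it is a $u_i$-$v_i$ path in $G$ of length at most $\dist_G(u_i,v_i)+2j^\star\le\dist_G(u_i,v_i)+2k$. When the path $\rho_i^{j^\star}$ is bought, we set $G_i:=G_{i-1}\cup\rho_i^{j^\star}$, so all edges of $\rho_i^{j^\star}$ are present in $G_i$, and therefore $\dist_{G_i}(u_i,v_i)\le\dist_G(u_i,v_i)+2k$. Since $G_s=G_z$ is obtained from $G_i$ only by adding further edges (the path-buying phase is monotone: $G_{i-1}\subseteq G_i$ for all $i$), we have $G_i\subseteq G_s$, hence $\dist_{G_s}(u_i,v_i)\le\dist_{G_i}(u_i,v_i)\le\dist_G(u_i,v_i)+2k$, as claimed.

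The one point that needs care — and I expect it to be the main obstacle — is making sure Invariant (i) genuinely holds for the bought path, i.e. that the construction of $\rho_i^{j+1}$ from $\rho_i^j$ really adds at most $2$ to the length each time. This was argued in the text preceding the lemma: $\rho_i^{j+1}$ is obtained as a shortest $u_i$-$y$ path in $G_{i-1}$ (of length $\le\dist_{\rho_i^j}(u_i,x)$ by the choice of the cluster $C\ni x,y$), followed by a path of length $\le 2$ from $y$ to $x$ inside $G_\cC$, followed by the suffix of $\rho_i^j$ from $x$ onward; comparing with $\rho_i^j = [\text{prefix } u_i\!\to\! x] + [\text{suffix from }x]$, the length increase is at most $2$. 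The subsequent "shortcutting" that restores Invariant (ii) (replacing any subpath between the first and last vertex of an over-full cluster by $\le2$ edges) only decreases the length, so Invariant (i) is preserved throughout. Thus I would simply cite these established invariants rather than re-derive them. A secondary subtlety worth one sentence: the monotonicity $G_{i-1}\subseteq G_i$, which holds because at each step we either set $G_i:=G_{i-1}$ or $G_i:=G_{i-1}\cup\rho_i^j$; this justifies passing from $\dist_{G_i}$ to $\dist_{G_s}$.
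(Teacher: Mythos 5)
Your proposal is correct and follows essentially the same route as the paper: the paper's proof also observes (from the construction and Invariant (iii)) that some $\rho_i^j$ with $j\le k$ must be bought, and then applies Invariant (i) to bound its length by $\dist_G(u_i,v_i)+2k$. Your additional remarks on monotonicity of the $G_i$'s and on why Invariant (i) is preserved are just a more explicit spelling-out of what the paper delegates to "the above discussion."
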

\begin{proof}
From the above discussion, we buy at least one path $\rho^j_i$ for some $0\leq j\leq k$. By Invariant (i), the length of the latter path is  at most the length of the shortest path $\rho_i$ between $u_i$ and $v_i$ plus $2k$.
\end{proof}

\begin{lemma}
\label{lem:sizeSourcewise}
For $\beta$ such that $n^\beta=(n^{1/k}(2k+3)|S|)^{k/(2k+1)}$, the subgraph $G_s$ contains \linebreak $O(n^{1+1/(2k+1)}(k|S|)^{k/(2k+1)})$ edges. 
\end{lemma}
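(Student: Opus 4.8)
The plan is to bound the two contributions to $|E(G_s)|$ separately: the edges coming from the clustering phase and the edges added during the path-buying phase. The clustering phase contributes $O(n^{1+\beta})$ edges by Lemma~\ref{lem:clustering}, so with the stated choice of $\beta$ this is $O(n\cdot(n^{1/k}(2k+3)|S|)^{k/(2k+1)}) = O(n^{1+1/(2k+1)}(k|S|)^{k/(2k+1)})$ after absorbing the constant $(2k+3)^{k/(2k+1)}=O(k)$ (indeed $O(\sqrt{k})$, but $O(k)$ suffices) into the big-O. So the real work is bounding the path-buying edges.

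For the path-buying phase, first I would observe that for each $i$ we buy exactly one path $\rho_i^{j}$, and the number of new edges it contributes is $\cost(\rho_i^{j}) \le 3\gamma\,\val(\rho_i^{j})$ by the buying rule, where $\gamma = (3n^{1-\beta})^{1/k}$. Hence the total number of path-buying edges is at most $3\gamma \sum_{\rho_i^{j}\text{ bought}} \val(\rho_i^{j})$. Recall $\val(\rho_i^{j})$ counts clusters $C$ containing a vertex of the bought path such that $\dist_{G_{i-1}}(u_i,C) > \dist_{\rho_i^{j}}(u_i,C)$; buying the path makes $\dist_{G_i}(u_i,C) \le \dist_{\rho_i^{j}}(u_i,C)$. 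The key bounding argument — analogous to Lemma~\ref{lem:s-2} — is a charging scheme: charge each unit of $\val$ to the pair $(u_i, C)$ whose distance strictly decreased. I need to argue each pair $(s,C)$ with $s\in S$, $C\in\cC$ is charged $O(k)$ times: after the first decrease the distance from $s$ to $C$ in the current graph is already within $\dist_G(s,C)+2$ (since the new path from $u_i$ through $C$ has length $\dist_{G_{i-1}}(u_i, y)$ for $y\in C$ plus at most $2$... actually more carefully, once $\dist(u_i,C)$ strictly decreases below a value that is at least $\dist_G(u_i,C)$, and it can only decrease by integer amounts while staying $\ge \dist_G(u_i,C)$, so the distance can strictly decrease at most... hmm, it starts at some value $\le \dist_G(u_i,C)+2k$ and ends $\ge \dist_G(u_i,C)$, so $O(k)$ decreases). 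So the total $\val$ summed over all bought paths is $O(k)\cdot|S|\cdot|\cC| = O(k\,|S|\,n^{1-\beta})$, giving path-buying edges $O(\gamma\,k\,|S|\,n^{1-\beta})$.

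It then remains to plug in $\gamma = (3n^{1-\beta})^{1/k}$, giving path-buying edges $O(k\,|S|\,n^{1-\beta}\cdot(n^{1-\beta})^{1/k}) = O(k\,|S|\,(n^{1-\beta})^{(k+1)/k})$, and to verify that the chosen $\beta$ balances this against the clustering term $O(n^{1+\beta})$. Setting $n^{1+\beta} = k\,|S|\,(n^{1-\beta})^{(k+1)/k}$ and solving for $n^\beta$ should reproduce, up to constant factors, $n^\beta = (n^{1/k}(2k+3)|S|)^{k/(2k+1)}$; substituting back into $n^{1+\beta}$ then yields the claimed $O(n^{1+1/(2k+1)}(k|S|)^{k/(2k+1)})$ bound. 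The main obstacle I expect is the charging argument: pinning down the precise statement of "after the first contribution, the distance from $s$ to $C$ is nearly optimal" and hence that each pair $(s,C)$ contributes $O(k)$ times to $\sum\val$ — one must use Invariant (i) to know the distance never drops below $\dist_G(s,C)$ and that its starting value is at most $\dist_G(u_i,v_i)+2k$-ish, so only $O(k)$ strict integer decreases are possible, mirroring but slightly refining the $3$-fold bound used in Lemma~\ref{lem:s-2}. The arithmetic of balancing $\beta$ is routine once the edge count $O(\gamma k |S| n^{1-\beta})$ is established.
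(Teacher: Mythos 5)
Your proposal is correct and follows essentially the same route as the paper: bound the bought-path edges by $3\gamma\sum\val$, charge each unit of value to a pair $(u_i,C)$ with $u_i\in S$, argue via Invariant (i) that after the first purchase $\dist_{G_i}(u_i,C)\le\dist_G(u_i,C)+2k+2$ (otherwise $\rho_i^{j_i}$ could be shortened by more than $2k$), so each such pair is charged $O(k)$ times, giving $O(\gamma k|S|n^{1-\beta})$ path-buying edges, and then balance against the $O(n^{1+\beta})$ clustering term. The only nitpicks are the exact constant ($2k+2$ rather than $2k$, harmless inside $O(k)$) and one slip where you wrote $\dist_G(u_i,v_i)+2k$ for the post-purchase $u_i$-to-$C$ distance, though your earlier, correct formulation with $\dist_G(u_i,C)$ is the one that makes the counting work.
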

\begin{proof}
To bound the size of $G_s$
we recall that in the first phase we have inserted $O(n^{1+\beta})$ edges.
Let $0 \le j_i \le k$ be the index of a path $\rho_i^{j_i}$ bought for a given $i$.
We claim, that any cluster $C$ contributes to $\val(\rho_i^{j_i})$ of at most
$|S|(2k+3)$ bought paths.
This holds because when for $u_i \in S$ a supported path is bought 
the distance between $u_i$ and $C$ is at most $2k+2$ greater than 
the distance between $u_i$ and $C$ in $G$: otherwise one could shorten $\rho_i^{j_i}$
by more than $2k$, obtaining a contradiction with Invariant (i).
Therefore the total number of edges added during the second phase is upper bounded by
$\sum_{i=1}^z \cost(\rho_i^{j_i}) \le \sum_{i=1}^z 3\gamma\val(\rho_i^{j_i}) \le 3\gamma(2k+3)|S|n^{1-\beta}$,
since each cluster $C \in \cC$ supports at most $|S|(2k+3)$ bought paths. The claim follows.
\end{proof}
The proof of Theorem \ref{thm:sourcewise} follows from Lemmas \ref{lem:stretchSourcewise} and \ref{lem:sizeSourcewise}.


\section{Pairwise spanners}
\label{section-pairwise}

In this section we present our pairwise spanners for arbitrary $\cP$. We start with a near-additive spanner (see Section \ref{section-pairwise-near}) and then present a purely-additive spanner (see Section \ref{section-pairwise-pure}). In both cases we let $\cP=\{(s_1,t_1),\ldots,(s_N,t_N)\}$ denote the set of pairs, $N=|\cP|$.


\subsection{A Near-Additive Pairwise Spanner}
\label{section-pairwise-near}

Our algorithm to construct the near-additive $\mathcal{P}$-spanner from Theorem \ref{thm:pairwise} consists of three phases. First, 
we use Lemma~\ref{lem:clustering} with the value of $\beta$
to be determined later, obtaining a cluster subgraph $G_\cC$ of $G$
of size $O(n^{1+\beta})$ together with a set $\cC$ of at most $n^{1-\beta}$ clusters.

At the start of the second phase we set $G_0 := G_\cC$ and consider the set of paths  $\{\rho_1,\ldots,\rho_N\}$, where $\rho_i$ 
is a shortest path between $s_i$ and $t_i$ in $G$.
Next we iterate over the paths $\rho_i$ for  $i=1,\ldots,N$.
By $\cost(\rho_i)$ we denote the number of edges of $\rho_i$
absent in $G_{i-1}$, and by $\val(\rho_i)$ we denote the number of 
pairs of clusters $(C_1,C_2) \in \cC$, such that both $C_1$ and $C_2$ contain at
least one vertex of $\rho_i$ and $\dist_{\rho_i}(C_1,C_2) < \dist_{G_{i-1}}(C_1,C_2)$.
For a given $i$ if
$$\cost(\rho_i) \le \frac{12\log n}{\epsilon} \sqrt{\val(\rho_i)},$$
then we buy $\rho_i$, that is we set $G_i:= G_{i-1} \cup \rho_i$. Otherwise we set $G_i:= G_{i-1}$.

In the third phase we add to $G_N$ the multiplicative $(2\log n,0)$ spanner of size $O(n)$ given in \cite{HZ96}: this way we obtain the desired spanner $G_s$.


%
%

In the following two lemmas we bound the stretch and size of $G_s$, respectively.

\begin{lemma}
\label{lem:stretchPairwise1}
For each $(s_i,t_i)\in \cP$, $\dist_{G_s}(s_i,t_i)\leq (1+\eps)\dist_G(s_i,t_i)+4$.
\end{lemma}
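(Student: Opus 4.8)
The plan is to verify the stretch bound one pair at a time. Fix $(s_i,t_i)\in\cP$, let $\rho_i$ be the shortest $s_i$--$t_i$ path examined at step $i$, and write $d:=\dist_G(s_i,t_i)$ and $t:=\cost(\rho_i)$ for the number of edges of $\rho_i$ missing in $G_{i-1}$. If $\rho_i$ was bought, then $\dist_{G_i}(s_i,t_i)\le d$ and we are done, since $G_i\subseteq G_s$. Otherwise the buying rule failed, i.e.\ $\val(\rho_i)<\bigl(\tfrac{\eps\,t}{12\log n}\bigr)^2$. If moreover $t\le \tfrac{\eps\,d}{2\log n}$, I would not use the clustering at all: rerouting each of the $t$ missing edges through the multiplicative $(2\log n,0)$-spanner added in the third phase (each gives a detour of length $\le 2\log n$) produces an $s_i$--$t_i$ walk in $G_s$ of length at most $(d-t)+2\log n\cdot t\le d+\eps\,d=(1+\eps)d$, so the claim holds even without the additive term.

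So assume $t>\tfrac{\eps\,d}{2\log n}$. Let $u$ and $w$ be, respectively, the first endpoint of the first, and the last endpoint of the last, edge of $\rho_i$ missing in $G_{i-1}$; the subpaths $\rho_i[s_i,u]$ and $\rho_i[w,t_i]$ are entirely present in $G_{i-1}$, and by the missing-edge property $u$ and $w$ are clustered. Let $A^{(1)},\dots,A^{(x')}$ be the clusters containing an endpoint of some missing edge of $\rho_i$, listed in order of appearance along $\rho_i$, so that $u\in A^{(1)}$ and $w\in A^{(x')}$. The counting argument of Lemma~\ref{lem:num-clusters} gives $t/2\le x'\le 2t$ and bounds by $\val(\rho_i)$ the number of \emph{bad} pairs $(A^{(p)},A^{(q)})$, i.e.\ those with $\dist_{\rho_i}(A^{(p)},A^{(q)})<\dist_{G_{i-1}}(A^{(p)},A^{(q)})$; using $t\le 2x'$ this number is fewer than $\bigl(\tfrac{\eps\,x'}{6\log n}\bigr)^2$. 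If the single pair $(A^{(1)},A^{(x')})$ is \emph{not} bad, I would route from $u$ to $w$ through $A^{(1)}$ and $A^{(x')}$: two uses of the cluster-diameter property give $\dist_{G_{i-1}}(u,w)\le 4+\dist_{G_{i-1}}(A^{(1)},A^{(x')})\le 4+\dist_{\rho_i}(A^{(1)},A^{(x')})\le 4+\dist_{\rho_i}(u,w)$, and prepending/appending $\rho_i[s_i,u]$ and $\rho_i[w,t_i]$ yields $\dist_{G_s}(s_i,t_i)\le d+4$.

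The hard part — and what I expect to be the main obstacle — is the remaining sub-case, when $(A^{(1)},A^{(x')})$ is bad. Then $\val(\rho_i)\ge 1$, which forces $t>\tfrac{12\log n}{\eps}$ and hence $x'>\tfrac{6\log n}{\eps}$, so that $\sigma:=\lceil \tfrac{\eps\,x'}{6\log n}\rceil$ satisfies $\sigma<\tfrac{\eps\,x'}{3\log n}$ and $2\sigma<x'$. Since there are $\sigma^2$ pairs with one cluster among $A^{(1)},\dots,A^{(\sigma)}$ and the other among $A^{(x'-\sigma+1)},\dots,A^{(x')}$, and strictly fewer than $\bigl(\tfrac{\eps\,x'}{6\log n}\bigr)^2\le\sigma^2$ bad pairs, one such pair $(A^{(a)},A^{(b)})$ with $a\le\sigma$, $b\ge x'-\sigma+1$ is not bad. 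I would then fix vertices $p\in A^{(a)}$, $q\in A^{(b)}$ on $\rho_i$ appearing in the order $u,p,q,w$ along $\rho_i$, and handle the three pieces of $\rho_i[u,w]$ separately: (i) $\rho_i[u,p]$ meets, among the active clusters, only $A^{(1)},\dots,A^{(a)}$, hence contains at most $2\sigma$ missing edges, which I reroute one by one through the multiplicative spanner at total cost $\le |\rho_i[u,p]|+2\sigma\cdot 2\log n<|\rho_i[u,p]|+\tfrac{4}{3}\eps\,x'$; (ii) symmetrically for $\rho_i[q,w]$; (iii) $\rho_i[p,q]$ is replaced, through $A^{(a)}$, $A^{(b)}$ and the cluster-diameter property, by a walk of length $\le 4+\dist_{\rho_i}(A^{(a)},A^{(b)})\le 4+|\rho_i[p,q]|$. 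Concatenating these with $\rho_i[s_i,u]$ and $\rho_i[w,t_i]$, and using $x'\le 2t\le 2d$, gives an $s_i$--$t_i$ walk in $G_s$ of length at most $d+4+O(\eps\,d)$; the multiplicative constant in the buying rule is tuned so that this $O(\eps\,d)$ overhead is at most $\eps\,d$, which is exactly $(1+\eps)d+4$.

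Two points that need care in the full write-up: the $O(1)$ ambiguity in locating $u,w$ inside $A^{(1)},A^{(x')}$ and in comparing positions of clusters whose (at most three consecutive) vertices on $\rho_i$ interleave — which I would resolve using that the edges internal to such a triple are themselves present in $G_\cC$ — and keeping track of the constants so that the overhead in the last sub-case is genuinely $\le\eps\,d$ rather than merely $O(\eps\,d)$.
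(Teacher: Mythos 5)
Your overall strategy is the same as the paper's: if $\rho_i$ is not bought, use the failed buying condition to find a \emph{not-bad} pair of clusters (one near each end of the portion of $\rho_i$ containing missing edges), route the middle segment through $G_{i-1}$ via that pair at a cost of $+4$ (cluster-diameter property), and patch the few missing edges in the prefix and suffix through the multiplicative $(2\log n,0)$-spanner. However, as written there is a genuine quantitative gap: the lemma is about the algorithm with the \emph{fixed} threshold $\cost(\rho_i)\le \frac{12\log n}{\eps}\sqrt{\val(\rho_i)}$, and your accounting does not reach $(1+\eps)\dist_G(s_i,t_i)+4$ for that threshold. Tracking your constants: you pass from missing edges to clusters ($t\le 2x'$), then take $\sigma\approx\lceil \eps x'/(6\log n)\rceil<\eps x'/(3\log n)$, and allow up to $2\sigma$ missing edges in each of the two outer segments, so the detour cost is about $2\cdot 2\sigma\cdot 2\log n<\tfrac{8}{3}\eps x'\le \tfrac{16}{3}\eps\,d$ (plus the $O(1)$ interleaving slack, each unit of which costs another $2\log n$ and can only be absorbed by further inflating the constant using $d>12\log n/\eps$). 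So you prove roughly $(1+\tfrac{16}{3}\eps)d+4$, and your closing remark that ``the multiplicative constant in the buying rule is tuned'' is not available to you: the constant $12$ is part of the algorithm being analyzed, not a free parameter of the proof. (Rescaling $\eps$ would salvage the theorem up to constants in the size bound, but it does not prove this lemma as stated.)

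The paper closes exactly this gap with a sharper selection step that you are missing: instead of fixing blocks of $\sigma$ clusters at the two ends, it orders the \emph{clustered vertices} $v_{i_0},\dots,v_{i_w}$ of $\rho_i$ and chooses indices $a\le b$ whose clusters form a not-bad pair with $b-a$ \emph{maximum}. Then every pair of clusters with one vertex among the first $\lceil x/2\rceil$ and one among the last $\lceil x/2\rceil$ clustered vertices (where $x=a+(w-b)$) is bad, giving $\val(\rho_i)\ge (x/6)^2$; combined with the failed buying condition and $w\ge\cost(\rho_i)$ this yields $x\le \frac{\eps w}{2\log n}$, where $x$ is precisely the number of missing edges outside the middle segment. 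The detour cost is then $2x\log n+4\le \eps\,\dist_G(s_i,t_i)+4$ with the constant $12$ exactly, no factor lost to the edge-versus-cluster translation, to the choice of $\sigma$, or to the two-missing-edges-per-cluster bound. I recommend you replace your fixed-$\sigma$ block argument by this maximal-gap choice; your sub-case analysis (bought path; $|I|\le 1$; the clean case where the extreme pair is not bad) then collapses into the paper's single argument.
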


\begin{proof}
Clearly we can assume that the path $\rho_i$ was not bought in the second phase,
since otherwise the claim trivially holds.
Therefore 
$\cost(\rho_i) > \frac{12\log n}{\epsilon} \sqrt{\val(\rho_i)}.$

Let $\rho_i=(v_0=s_i,v_1,\ldots,v_{\ell-1},v_{\ell}=t_i)$
and let $I \subseteq \{0,\ldots, \ell\}$ be the set of all indices
$j$ such that $v_j$ is clustered.
Observe that if $|I| \le 1$, then by the missing-edge property
the whole path $\rho_i$ is present in $G_\cC$, and hence the claim holds.
Therefore denote $I = \{i_0,\ldots,i_w\}$, where $i_0 < i_1 < \ldots < i_w$ and $w \ge 1$.
Let $0 \le a \le b \le w$ be two indices, such that $v_{i_a} \in C_1$, $v_{i_b} \in C_2$ (for some $C_1, C_2 \in \cC$),
$\dist_{\rho_i}(C_1, C_2) \ge \dist_{G_{i-1}}(C_1,C_2)$ and the value of $b-a$ is maximized.
Note that such a pair of indices $a,b$ always exists, since we can take $a=b$.

Let $x=a+(w-b)$. Observe that any cluster $C \in \cC$ contains at most $3$ vertices of $V_I=\{v_{i_j} : 0 \le j \le w\}$,
since otherwise by the cluster-diameter property $\rho_i$ 
would not be a shortest $s_i$-$t_i$ path.
Therefore there are at least $x/6$ clusters $\cC'$ having at least one vertex in the set $\{v_{i_j} : 0 \le j < \lceil x/2 \rceil\}$,
and at least $x/6$ clusters $\cC''$ having at least one vertex in the set $\{v_{i_j} : w-\lceil x/2 \rceil < j \le w\}$.
However, each of the at least $(x/6)^2$ pairs of clusters in $\cC'\times \cC''$ contributes to $\val(\rho_i)$ 
since the difference between indices in the corresponding set is at least 
$w-\lceil x/2 \rceil + 1 - (\lceil x/2 \rceil - 1) > w-x$.
Therefore $w \ge \cost(\rho_i) > \frac{12 \log n}{\epsilon} \frac{x}{6}$ and hence
$x \le \frac{w\epsilon}{2\log n}$.

The latter bound on $x$ is sufficient to prove the claim. In fact, consider the path between $s_i$ and $t_i$ in $G_s$ obtained by concatenating the following paths
(as illustrated in Fig.~\ref{fig2}):
\begin{itemize}
  \item A shortest path in $G_s$ from $s_i$ to $v_{i_a}$. Note that in the prefix of $\rho_i$ between $s_i$ and $v_{i_a}$ there
  are $a+1$ clustered nodes and hence at most $a$ edges absent in $G_{i-1}$ (by the missing-edge property). Since $G_s$ contains the $(2\log n,0)$-spanner added in the third phase, each missing edge can be replaced by at path of length $2\log n$. Consequently, there is a path from $s_i$ to $v_{i_a}$ of length at most $\dist_{\rho_i}(s_i,v_{i_a})+2a\log n$ in $G_s$.
  \item A shortest path in $G_s$ from $v_{i_a}$ to $v_{i_b}$. Let $C_1, C_2 \in \cC$ be the clusters containing $v_{i_a}$ and $v_{i_b}$ respectively.
  We know that in $G_{i-1}$ there is a path from $C_1$ to $C_2$ of length at most $\dist_{\rho_i}(v_{i_a},v_{i_b})$,
  which can be extended to a path between $v_{i_a}$ and $v_{i_b}$ in $G_{i-1}$ by adding at most $4$ edges
  (by the cluster-diameter property).
  \item A shortest path in $G_s$ from $v_{i_b}$ to $t_i$. Observe that in the suffix of $\rho_i$ between $v_{i_b}$ and $t_i$
  there are at most $w-b$ edges absent in $G_{i-1}$ by the same argument as above. Hence, thanks to the $(2\log n,0)$-spanner added in the third phase, there is a path from $v_{i_b}$ to $t_i$ of length at most $\dist_{\rho_i}(v_{i_b},t_i)+(w-b)2\log n$ in $G_s$.
\end{itemize}
The resulting path is of length at most 
\begin{align*}
& \dist_{\rho_i}(s_i,v_{i_a})+2a\log n+\dist_{\rho_i}(v_{i_a},v_{i_b})+4+\dist_{\rho_i}(v_{i_b},t_i)+(w-b)2\log n \\
 = & \;
\dist_G(s_i,t_i)+2x\log n+4 \le (1+\epsilon)\dist_G(s_i,t_i)+4,
\end{align*} 
where the last inequality
follows from $x \le \frac{\epsilon w}{2 \log n}$ together with $w \le \dist_G(s_i,t_i)$.
\begin{figure}[t]
\begin{center}
\includegraphics{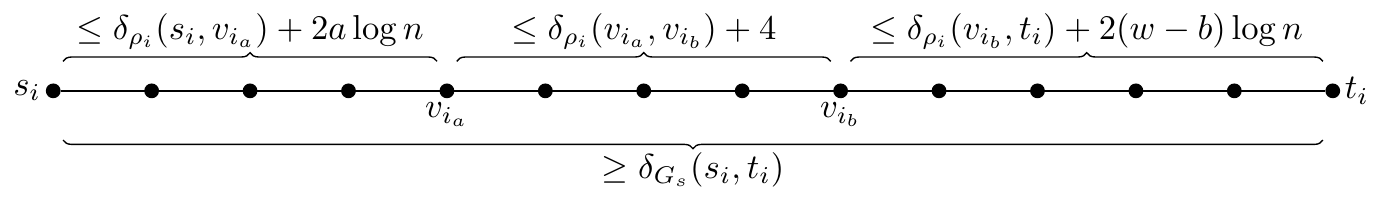}
\end{center}
\caption{Illustration of the three paths concatenation in the proof of Lemma~\ref{lem:stretchPairwise1}.}
\label{fig2}
\end{figure}
\end{proof}

\begin{lemma}
\label{lem:sizePairwise1}
For $\beta$ such that $n^{2\beta} = \sqrt{N} \frac{\log n}{\epsilon}$ the 
size of $G_s$ is $O(nN^{1/4}\sqrt{\log n / \epsilon})$.
\end{lemma}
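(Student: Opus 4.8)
The plan is to bound separately the three contributions to the size of $G_s$: the cluster subgraph $G_\cC$ from the first phase, the edges added by bought paths in the second phase, and the multiplicative $(2\log n,0)$-spanner of size $O(n)$ added in the third phase. The third term is trivially $O(n)$ by~\cite{HZ96}, and the first is $O(n^{1+\beta})$ by Lemma~\ref{lem:clustering}; the heart of the argument is the second term. First I would recall that each bought path $\rho_i$ contributes exactly $\cost(\rho_i)$ new edges, and since we only buy when $\cost(\rho_i)\le \frac{12\log n}{\eps}\sqrt{\val(\rho_i)}$, the total number of second-phase edges is at most $\frac{12\log n}{\eps}\sum_{\rho_i\text{ bought}}\sqrt{\val(\rho_i)}$.

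The key step is to bound $\sum_{\rho_i\text{ bought}}\val(\rho_i)$, i.e.\ the total value accumulated over all bought paths, and then apply Cauchy--Schwarz to pass to the sum of square roots. Here I would argue, in the spirit of Lemma~\ref{lem:s-2}, that each ordered pair of clusters $(C_1,C_2)$ can contribute to the value of only a bounded number of bought paths: once a path $\rho_i$ with $\dist_{\rho_i}(C_1,C_2)<\dist_{G_{i-1}}(C_1,C_2)$ is bought, the distance $\dist_{G}(C_1,C_2)$ in the current spanner drops to at most $\dist_{G}(C_1,C_2)$ (the true graph distance), so after the first contribution the distance between $C_1$ and $C_2$ is essentially pinned, and only a constant number of further strict decreases are possible. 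Since there are at most $n^{1-\beta}$ clusters, there are at most $(n^{1-\beta})^2=n^{2-2\beta}$ ordered pairs, so $\sum_{\rho_i\text{ bought}}\val(\rho_i)=O(n^{2-2\beta})$. Because there are at most $N$ bought paths in total, Cauchy--Schwarz gives $\sum_{\rho_i\text{ bought}}\sqrt{\val(\rho_i)}\le\sqrt{N}\cdot\sqrt{\sum_i\val(\rho_i)}=O(\sqrt{N}\,n^{1-\beta})$. Hence the second-phase contribution is $O\!\left(\frac{\log n}{\eps}\sqrt{N}\,n^{1-\beta}\right)$.

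Combining, the total size is $O\!\left(n^{1+\beta}+\frac{\log n}{\eps}\sqrt{N}\,n^{1-\beta}+n\right)$. I would then balance the first two terms by setting $n^{2\beta}=\sqrt{N}\,\frac{\log n}{\eps}$, exactly as in the lemma statement, which makes both terms equal to $n\cdot\big(\sqrt{N}\,\frac{\log n}{\eps}\big)^{1/2}=O(nN^{1/4}\sqrt{\log n/\eps})$; this dominates the $O(n)$ term, giving the claimed bound.

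The main obstacle I anticipate is making the ``each cluster pair contributes a bounded number of times'' claim fully rigorous: one must verify that after a pair $(C_1,C_2)$ first contributes, the inter-cluster distance in the running spanner is within an additive constant of its final (true-graph) value and is monotonically non-increasing, so that only $O(1)$ subsequent strict drops can occur. A minor subtlety is that the definition of $\val$ uses strict inequality $\dist_{\rho_i}(C_1,C_2)<\dist_{G_{i-1}}(C_1,C_2)$ on the path distance rather than on a shortest-path distance through the clusters, so I would check that buying $\rho_i$ indeed reduces $\dist_{G_i}(C_1,C_2)$ to at most $\dist_{\rho_i}(C_1,C_2)$, and that combined with the cluster-diameter property this keeps the relevant quantity bounded. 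Everything else is routine bookkeeping and the final optimization of $\beta$.
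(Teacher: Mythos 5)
Your proposal is correct and follows essentially the same route as the paper: the clustering phase contributes $O(n^{1+\beta})$ edges, a bounded-contribution argument per cluster pair gives $\sum_{\rho_i \text{ bought}}\val(\rho_i)=O(n^{2-2\beta})$, Cauchy--Schwarz over the at most $N$ bought paths yields the second-phase bound, and the same choice of $\beta$ balances the terms. The one loose statement — that after a buy the inter-cluster distance drops to at most $\dist_G(C_1,C_2)$ — is precisely the subtlety you flag yourself, and the paper resolves it the way you suggest: by the cluster-diameter property the distance drops to at most $\dist_G(C_1,C_2)+4$, so each cluster pair contributes to $\val$ of at most $5$ bought paths.
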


\begin{proof}
The clustering phase gives $O(n^{1+\beta})$ edges, which matches the desired bound on $G_s$.
Let $\cB$ be the set of paths $\rho_i$ bought in the path-buying phase.
Observe, that if a pair of clusters $C_1,C_2$ contributes to $\val(\rho_i)$ of a bought path $\rho_i \in \cB$,
then when $\rho_i$ is bought we have $\dist_{G_i}(C_1,C_2) \le \dist_G(C_1,C_2)+4$, since otherwise
the subpath of $\rho_i$ between $C_1$ and $C_2$ might be shortened (by the cluster-diameter property). It follows that each pair of clusters contributes at most $5$ times to $\val(\rho_i)$, and hence 
\begin{equation}
\hspace{4cm}\sum_{\rho_i \in \cB}\val(\rho_i)\leq 5(n^{1-\beta})^2.\label{eqn:boundPairs}
\end{equation}
The total number of edges added in the second phase is therefore upper bounded by 
\begin{eqnarray*}
\sum_{\rho_i \in \cB} \cost(\rho_i) & \le & \sum_{\rho_i \in \cB} \frac{12\log n}{\epsilon} \sqrt{\val(\rho_i)} \\
   & \overset{\substack{\text{Cauchy-Schwarz}\\\text{inequality}}}{\le} & \frac{12\log n}{\epsilon} \sqrt{\sum_{\rho_i \in \cB}\val(\rho_i)} \sqrt{N} \\
  & \overset{\eqref{eqn:boundPairs}}{\le} & \frac{12\log n}{\epsilon} \sqrt{5} n^{1-\beta} \sqrt{N} \\
  & \le & 12\sqrt{5}N^{1/4}n\sqrt{\log n / \epsilon}.
\end{eqnarray*}

Finally, in the last phase we insert only $O(n)$ edges when adding the $(2\log n,0)$-spanner.
\end{proof}

Having Lemmas~\ref{lem:stretchPairwise1} and \ref{lem:sizePairwise1}, the proof of Theorem~\ref{thm:pairwise} follows.


\subsection{A Purely-Additive Pairwise Spanner}
\label{section-pairwise-pure}

In this section we describe an algorithm to compute the purely-additive $\cP$-spanner from Theorem \ref{thm:pairwise2}. To that aim we will combine ideas from the proofs of Theorems \ref{thm:pairwise} and \ref{thm:sourcewise}.

Our algorithm consists of the usual clustering phase (for an appropriate parameter $\beta$) followed by a path-buying phase that we next describe. 

Let $\cC$ and $G_\cC$ be the clustering and the associated cluster graph. At the beginning of the path-buying phase, we set $G_0 := G_\cC$ and consider the set $\{\rho_1,\ldots,\rho_N\}$,
where $\rho_i$ is a shortest path between $s_i$ and $t_i$ in $G$.
Next we iterate over the paths $\rho_i$ for $i=1,\ldots,N$.
For a given $i$ we are going to define paths $\rho_i^j$, where $0 \le j \le k$,
maintaining the following invariants:\smallskip

\begin{itemize}
  \item[(i)] $\rho_i^j$ is a path between $s_i$ and $t_i$ in $G$ of 
  length at most $\dist_G(s_i,t_i)+4j$,
  \item[(ii)] any cluster $C \in \cC$ contains at most three vertices of $\rho_i^j$,
  \item[(iii)] $\cost(\rho_i^j) \le 2n^{1-\beta} / \gamma^j$, where $\cost(\rho_i^j)$ is the number of edges of $\rho_i^j$ absent in $G_{i-1}$,
and $\gamma=(3n^{1-\beta})^{1/k}$.
\end{itemize}

\smallskip\noindent Our algorithm will buy exactly one path $\rho_i^j$, which will ensure by Invariant (i) that 
in $G_{i}$ the distance between $s_i$ and $t_i$ is at most $\dist_G(u_i,v_i)+4k$.
By $\val(\rho_i^j)$ let us denote the number 
of pairs of clusters $C_1,C_2 \in \cC$, such that both $C_1$ and $C_2$ contain at
least one vertex of $\rho_i^j$ and $\dist_{\rho_i^j}(C_1,C_2) < \dist_{G_{i-1}}(C_1,C_2)$.

We set $\rho_i^0 := \rho_i$.
Observe that for $j=0$ Invariant (i) is trivially satisfied,
Invariant (ii) is satisfied by the cluster-diameter property 
(otherwise $\rho_i$ would not be a shortest path),
and Invariant (iii) is satisfied because there are at most $n^{1-\beta}$ clusters in $\cC$
and consequently by Lemma~\ref{lem:num-clusters}, $\cost(\rho_i) \le 2n^{1-\beta}$. 

Say we have constructed $\rho_i^j$, where $j\in \{0,\ldots,k\}$.
If
$$\cost(\rho_i^j) \le 6\gamma\sqrt{\val(\rho_i^j)}\,,$$
then we buy the path $\rho_i^j$,
i.e. as $G_i$ we take the union of $G_{i-1}$ and $\rho_i^j$,
ignore remaining values of $j$ and proceed with the next value of $i$.
Otherwise (i.e., $\cost(\rho_i^j) > 6\gamma\sqrt{\val(\rho_i^j)}$), 
we construct a path $\rho_i^{j+1}$ as follows:

Let $\rho_i^j=(v_0=s_i,v_1,\ldots,v_{\ell-1},v_{\ell}=t_i)$
and let $I \subseteq \{0,\ldots, \ell\}$ be the set of all indices
$j$ such that $v_j$ is clustered.
Observe that if $|I| \le 1$, then by the missing-edge property
the whole path $\rho_i^j$ is present in $G_\cC$, and hence it 
is of zero cost, which contradicts the assumption
$\cost(\rho_i^j) > 6\gamma\sqrt{\val(\rho_i^j)}$.
Therefore denote $I = \{i_0,\ldots,i_w\}$, where $i_0 < i_1 < \ldots < i_w$ and $w \ge 1$. Let $0 \le a \le b \le w$ be two indices,
such that $v_{i_a} \in C_1$, $v_{i_b} \in C_2$ (for some $C_1, C_2 \in \cC$),
$\dist_{\rho_i^j}(C_1, C_2) \ge \dist_{G_{i-1}}(C_1,C_2)$ and the value of $b-a$ is maximized.
Note that such a pair of indices $a,b$ always exists, since we can take $a=b$.

Let $x=a+(w-b)$. By Invariant (ii) there are at least $x/6$ clusters $\cC'$ having at least 
one vertex in the set $\{v_{i_j} : 0 \le j < \lceil x/2 \rceil\}$,
and at least $x/6$ clusters $\cC''$ having at least one vertex 
in the set $\{v_{i_j} : w-\lceil x/2 \rceil < j \le w\}$.
However, each of the at least $(x/6)^2$ pairs of clusters in $\cC'\times \cC''$ contributes to $\val(\rho^j_i)$ since the difference between indices in the corresponding set is at least 
$w-\lceil x/2 \rceil + 1 - (\lceil x/2 \rceil - 1) > w-x$.
Therefore 
\begin{equation}
\cost(\rho_i^j) > 6\gamma \sqrt{\val(\rho^j_i)}\geq 6 \gamma \sqrt{\left(\frac{x}{6}\right)^2}=\gamma\,x\quad \Rightarrow \quad x  \le \cost(\rho_i^j)/\gamma.\label{eqn:pairwise2}
\end{equation}

We construct the path $\rho_i^{j+1}$ by appending the following three
paths $A$, $B$, and $C$:
\begin{itemize}
  \item As $A$ we take the prefix of $\rho_i^j$ from $s_i$ to $v_{i_a}$.
  Note that this prefix contains $a+1$ clustered nodes and hence at most $a$ edges absent 
  in $G_{i-1}$ (by the missing-edge property of $G_{\cC}$).
  \item Let $C_1, C_2 \in \cC$ be the clusters containing $v_{i_a}$ and $v_{i_b}$ respectively.
  We know that in $G_{i-1}$ there is a path from $C_1$ to $C_2$ of length at most $\dist_{\rho_i^j}(v_{i_a},v_{i_b})$,
  which can be extended to a path $B$ between $v_{i_a}$ and $v_{i_b}$ in $G_{i-1}$ by adding at most $4$ edges
  (by the cluster-diameter property).
  \item As $C$ we take the suffix of $\rho_i^j$ from $v_{i_b}$ to $t_i$,
  which contains at most $w-b$ edges absent in $G_{i-1}$ by the same argument as above.
\end{itemize}
Observe that $\rho_i^{j+1}$ contains at most $a+(w-b)=x$ edges absent in $G_{i-1}$, hence by \eqref{eqn:pairwise2} we ensure Invariant (iii).
Moreover the length of $\rho_i^{j+1}$ is at most the length of $\rho_i^j$ plus $4$, which ensures Invariant (i).
In order to ensure Invariant (ii), as long as there exists a cluster $C \in \cC$
containing at least $4$ vertices of $\rho_i^{j+1}$ we let
$u$ and $v$ be two such vertices closest to $s_i$ and $t_i$ on $\rho_i^{j+1}$ respectively
and replace the subpath of $\rho_i^{j+1}$ between $u$ and $v$ (which is of length at least three)
by a path of length at most two in $G_{i-1}$ (which exists by  the cluster-diameter property).

Observe that by Invariant (iii) we have $\cost(\rho_i^k) \le 2/3$, hence $\cost(\rho_i^k) = 0$ which ensures that we buy a path $\rho_i^j$ for some $j \le k$. Finally, as our spanner $G_s$ we take $G_s := G_N$.


\begin{lemma}\label{lem:stretchPairwise2}
For each $(s_i,t_i)\in \cP$, $\dist_{G_s}(s_i,t_i)\leq \dist_G(s_i,t_i)+4k$.
\end{lemma}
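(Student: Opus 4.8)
The plan is to mirror the stretch argument already given for the sourcewise spanner (Lemma~\ref{lem:stretchSourcewise}), since the path-buying construction here has exactly the same skeleton: for each pair $(s_i,t_i)\in\cP$ we build a finite sequence of candidate paths $\rho_i^0,\dots,\rho_i^k$ all joining $s_i$ and $t_i$, and we are guaranteed to buy one of them. So the first step is simply to record that the algorithm does buy some $\rho_i^{j}$ with $0\le j\le k$. This is not automatic from the buying condition alone, but it was established in the discussion preceding the lemma: by Invariant~(iii), $\cost(\rho_i^k)\le 2n^{1-\beta}/\gamma^k = 2n^{1-\beta}/(3n^{1-\beta}) = 2/3$, and since $\cost(\cdot)$ is integer-valued this forces $\cost(\rho_i^k)=0$; a path of zero cost trivially satisfies $\cost(\rho_i^k)=0\le 6\gamma\sqrt{\val(\rho_i^k)}$, so it is bought if no earlier $\rho_i^j$ was. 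Hence at the iteration $i$ in which $(s_i,t_i)$ is processed, $G_i$ contains all edges of some $\rho_i^{j_i}$.

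The second step is to invoke Invariant~(i) for that bought path: $\rho_i^{j_i}$ is a path between $s_i$ and $t_i$ in $G$ of length at most $\dist_G(s_i,t_i)+4j_i\le \dist_G(s_i,t_i)+4k$. Since all edges of $\rho_i^{j_i}$ lie in $G_i$, and $G_i$ is a subgraph of the final spanner $G_s=G_N$ (the graphs $G_0\subseteq G_1\subseteq\cdots\subseteq G_N$ only grow), we get $\dist_{G_s}(s_i,t_i)\le \dist_{G_i}(s_i,t_i)\le \dist_G(s_i,t_i)+4k$, which is exactly the claimed bound. That completes the proof; it is essentially a two-line argument once the two ingredients — "a path is bought" and "Invariant~(i)" — are in place.

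There is no real obstacle in the lemma itself: the substantive work — proving the three invariants are maintained under the $\rho_i^j\mapsto\rho_i^{j+1}$ construction, and in particular that the detour through a cluster adds only $4$ to the length (Invariant~(i)) and that cost shrinks by a factor $\gamma$ (Invariant~(iii)) — was already carried out in the text preceding the lemma statement. The only point one must be slightly careful about is that the factor is $4j$ here rather than the $2j$ of the sourcewise case: each time we splice in the cluster-to-cluster detour we add a path $B$ of length $\le \dist_{\rho_i^j}(v_{i_a},v_{i_b})+4$ in place of the subpath of length $\dist_{\rho_i^j}(v_{i_a},v_{i_b})$, a net increase of at most $4$, and the subsequent Invariant~(ii) cleanup only replaces subpaths of length $\ge 3$ by subpaths of length $\le 2$, which never increases the length. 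So the length grows by at most $4$ per step, giving $4k$ after $k$ steps. Thus the proof I would write simply states: the algorithm buys some $\rho_i^{j}$, $0\le j\le k$; by Invariant~(i) its length is at most $\dist_G(s_i,t_i)+4k$; and since this path survives into $G_s$, the distance bound follows.

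\begin{proof}
As argued above, for each $i$ the algorithm buys some path $\rho_i^{j}$ with $0\le j\le k$: indeed, by Invariant (iii), $\cost(\rho_i^k)\le 2n^{1-\beta}/\gamma^k=2/3$, and since $\cost(\cdot)$ takes only integral values, $\cost(\rho_i^k)=0$, so $\rho_i^k$ (if no earlier path was bought) satisfies the buying condition $\cost(\rho_i^k)=0\le 6\gamma\sqrt{\val(\rho_i^k)}$. Fix such a bought path $\rho_i^{j_i}$. By Invariant (i), $\rho_i^{j_i}$ is a path between $s_i$ and $t_i$ of length at most $\dist_G(s_i,t_i)+4j_i\le \dist_G(s_i,t_i)+4k$. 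When $\rho_i^{j_i}$ is bought, all its edges are added to $G_i$; since $G_i\subseteq G_N=G_s$, every edge of $\rho_i^{j_i}$ is present in $G_s$. Therefore $\dist_{G_s}(s_i,t_i)\le \dist_G(s_i,t_i)+4k$.
\end{proof}
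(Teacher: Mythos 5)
Your proof is correct and follows exactly the paper's argument: note that some $\rho_i^{j}$ with $0\le j\le k$ is necessarily bought (since Invariant (iii) forces $\cost(\rho_i^k)=0$), then apply Invariant (i) and the fact that $G_i\subseteq G_s$. The extra detail you give about why the length increment is $4$ per step merely re-verifies Invariant (i), which the paper already established before the lemma.
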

\begin{proof}
From the above discussion, $G_s$ contains at least one path $\rho^j_i$ between $s_i$ and $t_i$ for some $0\leq j\leq k$. The claim follows by Invariant (i). 
\end{proof}


\begin{lemma}\label{lem:sizePairwise2}
For $\beta$ such that $n^{\beta}=(6n^{1/k} \sqrt{(4k+5)N})^{k/(2k+1)}$ the
size of $G_s$ is\newline $O(n^{1+1/(2k+1)}(\sqrt{(4k+5)N})^{k/(2k+1)})$.
\end{lemma}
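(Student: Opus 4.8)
The plan is to bound the size of $G_s$ by separately accounting for the edges added in the clustering phase and in the path-buying phase, exactly in the style of Lemma~\ref{lem:sizePairwise1} and Lemma~\ref{lem:sizeSourcewise}. The clustering phase contributes $O(n^{1+\beta})$ edges by Lemma~\ref{lem:clustering}, and for the stated choice of $\beta$ this equals $O(n^{1+1/(2k+1)}(\sqrt{(4k+5)N})^{k/(2k+1)})$, so the whole task reduces to showing the path-buying phase adds at most the same number of edges. Let $\cB$ be the set of bought paths (one $\rho_i^{j_i}$ per index $i$ for which something is bought, so $|\cB|\le N$). The number of new edges inserted in the path-buying phase is $\sum_{\rho_i^{j_i}\in\cB}\cost(\rho_i^{j_i})\le \sum_{\rho_i^{j_i}\in\cB} 6\gamma\sqrt{\val(\rho_i^{j_i})}$, and by Cauchy--Schwarz this is at most $6\gamma\sqrt{N}\sqrt{\sum_{\rho_i^{j_i}\in\cB}\val(\rho_i^{j_i})}$.

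The crux is therefore to prove an analogue of inequality~\eqref{eqn:boundPairs}: that $\sum_{\rho_i^{j_i}\in\cB}\val(\rho_i^{j_i})\le (4k+5)(n^{1-\beta})^2$. First I would argue, as in Lemma~\ref{lem:sizePairwise1}, that a fixed ordered pair of clusters $(C_1,C_2)$ can contribute to $\val(\rho_i^{j_i})$ of a bought path only a bounded number of times. When a bought path $\rho_i^{j_i}$ with $C_1$, $C_2$ each containing a vertex of it and $\dist_{\rho_i^{j_i}}(C_1,C_2)<\dist_{G_{i-1}}(C_1,C_2)$ is added to the spanner, the resulting graph $G_i$ satisfies $\dist_{G_i}(C_1,C_2)\le \dist_{\rho_i^{j_i}}(C_1,C_2)+4$ (attach a length-$\le 2$ path inside each of $C_1,C_2$ using the cluster-diameter property). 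But since $\rho_i^{j_i}$ has length at most $\dist_G(s_i,t_i)+4k$ by Invariant~(i), the sub-path of $\rho_i^{j_i}$ between the vertex in $C_1$ and the vertex in $C_2$ has length at most $\dist_G(s_i,t_i)+4k$, and hence at most $\delta_G(C_1,C_2)+4k$ once we note the portions before $C_1$ and after $C_2$ are shortest sub-paths (or, more carefully, that the distance $\dist_{\rho_i^{j_i}}(C_1,C_2)$ cannot exceed $\delta_G(C_1,C_2)+4k$, since otherwise we could shortcut $\rho_i^{j_i}$ by more than $4k$, contradicting Invariant~(i) the same way as in Lemma~\ref{lem:sizeSourcewise}). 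Thus the first time $(C_1,C_2)$ contributes, $\dist_{\cdot}(C_1,C_2)$ is at most $\delta_G(C_1,C_2)+4k$, and every subsequent contribution strictly decreases an integer-valued distance that is already $\ge\delta_G(C_1,C_2)$, so the total number of contributions of $(C_1,C_2)$ is at most $4k+5$. Multiplying by the number of ordered cluster pairs $(n^{1-\beta})^2$ gives the claimed bound.

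Putting this together, the path-buying phase adds at most $6\gamma\sqrt{N}\cdot\sqrt{(4k+5)}\,n^{1-\beta}=6\gamma\,n^{1-\beta}\sqrt{(4k+5)N}$ edges. Now substitute $\gamma=(3n^{1-\beta})^{1/k}$ and the stated value $n^{\beta}=(6n^{1/k}\sqrt{(4k+5)N})^{k/(2k+1)}$, and verify this quantity is $O(n^{1+1/(2k+1)}(\sqrt{(4k+5)N})^{k/(2k+1)})$ and that it matches $O(n^{1+\beta})$ — this is a routine computation balancing the two phases (the choice of $\beta$ is precisely the one that equalizes $n^{1+\beta}$ with $6\gamma\,n^{1-\beta}\sqrt{(4k+5)N}$, up to the $3^{1/k}=O(1)$ and $6=O(1)$ factors, noting $n^{1/k}$ inside $\gamma$ cancels against the $n^{1/k}$ in the exponent of $\beta$). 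Adding the two contributions yields the bound in the statement.

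I expect the main obstacle to be the bookkeeping in the "each cluster pair contributes at most $4k+5$ times" step: one must be careful that $\val$ is defined with respect to $G_{i-1}$ (the graph \emph{before} buying $\rho_i^{j_i}$) while the iterative construction of the $\rho_i^{j}$'s within a single index $i$ can itself shorten cluster-to-cluster distances — but since all those intermediate paths are only \emph{considered}, not bought, only the finally bought path matters for the global monotonicity argument, and the distances in the $G_i$'s are nonincreasing in $i$. The rest — Cauchy--Schwarz and plugging in $\beta$ and $\gamma$ — is mechanical.
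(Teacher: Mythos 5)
Your proposal follows the paper's own proof essentially verbatim: clustering gives $O(n^{1+\beta})$ edges, each cluster pair contributes to $\val$ of at most $4k+5$ bought paths via Invariant~(i) and the shortcut argument, then Cauchy--Schwarz and substitution of $\gamma$ and $\beta$ balance the two phases. The only nitpick is that the shortcut bound should be $\dist_{\rho_i^{j_i}}(C_1,C_2)\le\delta_G(C_1,C_2)+4k+4$ (the detours inside $C_1$ and $C_2$ cost up to $4$ extra edges), not $+4k$, but since you count at most $4k+5$ contributions per pair this does not affect the final bound.
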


\begin{proof}
The clustering phase gives $O(n^{1+\beta})$ edges, 
which matches the desired bound on $G_s$.
Let $0 \le j_i \le k$ be the index of a path $\rho_i^{j_i}$ bought for a given $i$.
We claim, that any pair of clusters contributes 
to $\val(\rho_i^{j_i})$ of at most $(4k+5)$ bought paths.
Observe, that if a pair of clusters $C_1,C_2$ contributes to $\val(\rho_i^{j_i})$,
then when $\rho_i^{j_i}$ is bought we have
$\dist_{G_i}(C_1,C_2) \le \dist_G(C_1,C_2)+4k+4$, since otherwise
the subpath of $\rho_i^{j_i}$ between $C_1$ and $C_2$ might be shortened 
by more than $4k$, contradicting Invariant (i).
The total number of edges added in the second phase is upper bounded by 
\begin{eqnarray*}
\sum_{1 \le i \le N} \cost(\rho_i^{j_i}) & \le & \sum_{1 \le i \le N} 6\gamma\sqrt{\val(\rho_i^{j_i})} \\
   & \overset{\substack{\text{Cauchy-Schwarz}\\\text{inequality}}}{\le} & 6\gamma \sqrt{\sum_{1 \le i \le N}\val(\rho_i^{j_i})} \sqrt{N} \\
  & \le & 6\gamma \sqrt{4k+5} n^{1-\beta} \sqrt{N}\,.
\end{eqnarray*}
By substituting $\gamma$ and $\beta$ the claim follows.
\end{proof}

Theorem~\ref{thm:pairwise2} follows from Lemmas \ref{lem:stretchPairwise2} and \ref{lem:sizePairwise2}

\section{Conclusions}

We considered a natural extension to the problem of computing a sparse spanner in an undirected unweighted graph. Along with the input
graph $G = (V,E)$,  a subset $\cP \subseteq V\times V$ of relevant pairs
of vertices is also given here and we seek a sparse subgraph $H$ of $G$ 
such that for every 
pair $(u,v)$ in $\cP$, the $u$-$v$ distance $\dist_H(u,v)$ in the subgraph is close to the 
$u$-$v$ distance $\dist_{G}(u,v)$ in $G$. We showed sparse subgraphs $H$ where $\delta_H(u,v)$ is a small additive or near-additive stretch away from $\delta_G(u,v)$. 

The pairwise preservers in \cite{CE05} are at the same time more accurate and sparser than our spanners for small enough values of $|\cP|$. In particular, in that range of values of $|\cP|$ the authors exploit a construction which does not seem to benefit from allowing a larger stretch. The authors also present lower bounds on the size of any preserver, however it is unclear whether those lower bounds extend to the case of pairwise spanners (where distances have to be approximated rather than preserved). Obtaining sparser pairwise spanners for very small $|\cP|$, if possible, is an interesting open problem.


%


\end{document}